\documentclass[11pt]{article}
\usepackage{amsfonts}
\usepackage{natbib}
\usepackage{amsmath,amssymb,array,graphicx}
\usepackage{enumerate}
\usepackage{graphicx}
\usepackage{multirow}
\usepackage{subfigure}
\usepackage{leftidx}
\usepackage{setspace}
\usepackage{tikz}
\usepackage{amsthm}
\usepackage[colorlinks,linkcolor=black,anchorcolor=black,citecolor=black]{hyperref}
\usetikzlibrary{shapes.geometric, arrows}
\usepackage[titletoc,toc,title]{appendix}
\usepackage{xcolor}

\theoremstyle{definition}
\newtheorem{corollary}{Corollary}

\newtheorem{definition}{Definition}
\newtheorem{theorem}{Theorem}

\newtheorem{lemma}{Lemma}
\newtheorem{remark}{Remark}
\newtheorem{assumption}{Assumption}
\newtheorem{proposition}{Proposition}

\setlength{\textwidth}{162mm} \setlength{\textheight}{23.0cm}
\setlength{\headheight}{0cm} \setlength{\topmargin}{-0.1cm}
\setlength{\oddsidemargin}{0cm} \setlength{\evensidemargin}{0cm}
\setlength{\parskip}{1mm} \setlength{\unitlength}{1mm}

\newcommand{\beq}{\begin{equation}}
\newcommand{\eeq}{\end{equation}}
\newcommand{\bey}{\begin{eqnarray}}
\newcommand{\eey}{\end{eqnarray}}
\newcommand{\nn}{\nonumber}

\setcounter{footnote}{1}

\def\ds{\displaystyle}

\def\i{{\infty}}

\def\t{\tau}

\usepackage{setspace}
\doublespacing
\numberwithin{equation}{section}

\begin{document}
\newpage
\title{Valuation of contingent claims with short selling bans under an equal-risk pricing framework}
\author{
 Guiyuan Ma\footnote{Corresponding author.
 School of Economics and Finance, Xi'an Jilting University, Xi'an,  China. Email:guiyuanma@xjtu.edu.cn.}
 \and
 Song-Ping Zhu\footnote{
School of Mathematics and Applied Statistics,
 University of Wollongong, NSW 2522, Australia.
Email: spz@uow.edu.au.}
\and
 Ivan Guo\footnote{
School of Mathematics,
 Monash University, VIC 3800, Australia. Monash Centre for Quantitative Finance and Investment Strategies.
 Email:ivan.guo@monash.edu.
}
}
\date{}
\maketitle
\begin{abstract}
This paper studies the valuation of European contingent claims with short selling bans under the equal risk pricing (ERP) framework proposed in \cite{guo2017equal} where analytical pricing formulae were derived in the case of monotonic payoffs under risk-neutral measures.
  We establish a unified framework for this  new pricing approach so that its range of application can be significantly expanded. The results of \cite{guo2017equal} are extended to the case of non-monotonic payoffs (such as a butterfly spread option) under risk-neutral measures. We also provide numerical schemes for computing equal-risk prices under other measures such as the original physical measure. Furthermore, we demonstrate how  short selling bans can affect the valuation of contingent claims  by comparing  equal-risk prices with Black-Scholes prices.
\end{abstract}
\smallskip\noindent
{\bf Keywords.} Equal-risk pricing (ERP); Short selling bans; Hamilton-Jacobi-Bellman (HJB) equation; Non-monotonic payoff.

\section{Introduction}
During the Global Financial Crisis in 2007--2009, most regulatory authorities around the world imposed restrictions or bans on short selling to reduce the volatility of financial markets and to limit the negative impacts of downturn markets \citep{beber2013short}. These interventions were implemented with an intention to  prevent further drops of stock prices. However, these regulations imposed on short selling also resulted in some new problems, one of which is the valuation of contingent claims in a market where short selling is partially restricted or completely banned.  In the literature, the effects of short selling restrictions on stock prices and the valuation of contingent claims have been studied extensively \citep{figlewski1981informational,jones2002short,avellaneda2009dynamic,ma2018pricing,chen2018numerical,ma2019pricing,atmaz2019option,ma2020convergence,beber2021short,shi2021dynamic}. In this paper, we focus on the valuation of contingent claims in a financial market where short selling is completely banned.

According to the fundamental theorem of asset pricing \citep{shreve2004stochastic}, every contingent claim can be replicated perfectly by some self-financing hedging strategy in a complete market and the price of the contingent claim must equal to the cost of constructing such a portfolio according to no-arbitrage arguments. However,  in incomplete markets where short selling is absent, such perfect hedging strategies are not always available. In the literature, the valuation of contingent claims in an incomplete market has also been explored extensively and a large number of approaches and techniques have been proposed. Generally, these related literature can be grouped into two categories.

Papers in the first category share a common feature that an equivalent martingale measure is chosen as the pricing measure according to some optimal criteria. Since the equivalent martingale measure is not unique in the incomplete market, the choice of the pricing measure varies among different studies. \cite{follmer1991hedging} first  provided a criterion to choose  a \textit{minimal martingale measure}. Then a \textit{minimal entropy martingale measure} was proposed  by \cite{frittelli2000minimal} to minimize the entropy  between the objective probability measure and the chosen risk-neutral measure.  Similar concepts, such as the \textit{minimal distance martingale measure} and the \textit{minimax measure} were also put forward by \cite{goll2001minimax} and \cite{bellini2002existence}, respectively. Each chosen pricing measure leads to a different price, which is ``fair'' according to the criteria behind the choice.  It is difficult to justify which choice of these equivalent martingale measures is most ``correct''. 

Papers in the second category include \cite{karatzas1996pricing}, \cite{davis1997option}, \cite{rouge2000pricing}, \cite{musiela2004example}  and \cite{hugonnier2005utility}. The key idea of these papers is the so-called \textit{utility indifference pricing}, which is characterized by an investor who chooses a utility function  to describe his risk preference. Then two concepts of ``fair price'' are introduced. The utility indifference buying price $p^b$ is the price at which the utility of the investor  is indifferent between (i) paying nothing and not having the claim; and (ii) paying $p^b$ now to receive the contingent claim at expiry \citep{henderson2009utility}. The utility indifference selling price is defined similarly. In the finance literature, the utility indifference price is also called the private or subjective price because such a price is derived based on the investor's own utility preference \citep{detemple1999nontraded,teplaa2000optimal}. Generally,  the utility indifference price is nonlinear  due to the concavity of the utility function, which is significantly different from the Black-Scholes price \citep{black1973pricing}.  These two prices only coincide in a complete market \citep{fleming2006controlled}.

For any contingent claim in an incomplete market, \cite{el1995dynamic} demonstrated that there exists a price interval  which avoids arbitrage opportunities. The maximum price of this interval, called the \textit{selling price}, is the lowest price that allows the seller to  superhedge the contingent claim. Similarly, the minimum price of this interval, called the \textit{buying price}, is the highest price that the buyer is willing to pay for a contingent claim while superhedging. Both of these concepts have been addressed in the literature on hedging and pricing under transaction costs \citep{hodges1989optimal,davis1997option,constantinides1999bounds,munk1999valuation}.
Both the selling price and the buying price are private prices for the respective parties as they serve to minimize unilateral risks.  In an over the counter transaction, the buyer and seller have to negotiate and compromise with each  other in order to reach a deal.

Recently, \cite{guo2017equal} proposed a completely new approach, referred to as the \textit{equal-risk pricing approach}, which determines the valuation of contingent claims by simultaneously analyzing the risk exposures for  both parties involved in the contract.  As pointed out in Remark 3.2 of \cite{guo2017equal}, it is different to the existing utility indifference pricing method.
The \textit{equal-risk price} aims to distribute the expected loss equally between the two parties. Such a price is interpreted as a fair price that both parties are willing to accept during the negotiation if they intend to enter into a derivative contract. The equal-risk price is a transactional price and it must lie in the price interval between the selling price and the buying price. Both the seller and the buyer  face the same amount of risk if they accept such a price. The existence and uniqueness of the equal-risk price has been established by \cite{guo2017equal} and they also demonstrated its consistency with the Black-Scholes price if the market is complete. Furthermore, two analytical formulae have also been derived in some special cases. However, the derivation heavily depends on the monotonicity of the payoff, which has limited its application to general contingent claims. The hedging arguments used in the equal-risk pricing approach also has some similarity to super-replication under gamma or under delta constraints \citep{Bouchard2016Hedging}.
Very recently, \cite{he2020revised} studied the case in which a ban of short selling of the underlying alone is somewhat less ``effective'' than the extreme case discussed by \cite{guo2017equal}. \cite{marzban2020equal} extended the equal risk approach to convex risk measures. \cite{carbonneau2021deep,carbonneau2021measure,carbonneau2020equal}  further develop a deep reinforcement learning approach to extend the work of \cite{guo2017equal}.

The main contribution of this paper that we establish a unified PDE framework for  the  equal-risk pricing approach so that its range of application can be significantly expanded. First, we derive the pricing formulae for European call and put options under our PDE framework, which demonstrates that it is consistent with the previous results of \cite{guo2017equal}. Next, we extend the analytical results of \cite{guo2017equal} to the case of general, non-monotonic payoff functions under the equivalent martingale measure $\mathbb{Q}$. Then, we demonstrate how to compute equal risk prices in other measures using numerical PDE approaches by pricing a butterfly spread option under the original physical measure $\mathbb{P}$. Finally, by comparing the equal-risk prices with  the Black-Scholes prices, we  numerically demonstrate  how the short selling ban affects the valuation of the butterfly spread option.

The paper is organized as follows. In Section 2, a financial market with a short selling ban is introduced  and then a PDE framework is established to derive equal-risk prices for general contingent claims. In Section 3, we derive the pricing formulae in the case of the equivalent martingale measure, for European call and put options as well as general options with non-monotonic payoffs. In Section 4, an ADI (alternative direction implicit) numerical scheme is provided to solve the PDE system and two numerical examples are demonstrated for the original physical measure. Conclusions are provided in the last section.
\section{The equal-risk pricing approach}\label{PDE_frame}
\subsection{A market model without short selling}
Consider a financial model on a  probability space $(\Omega,\mathcal{F},\mathbb{P})$. Let $\mathbb{F}=\{\mathcal{F}_t:t>0\}$ be the filtration which satisfies the usual conditions and it represents the information flow available to market participants. Let $\mathbb{P}$ denote the original physical measure in the market. We assume there are only two assets traded continuously in the market. One is a risk-free asset whose price satisfies
\begin{equation}\label{bond}
  dP_t=rP_tdt,
\end{equation}
where $r$ is the risk-free interest rate. The other one is a risky asset whose price follows
\begin{equation}\label{BS}
  dS_t=\mu S_tdt+\sigma S_tdW_t,
\end{equation}
where $\sigma$ is the volatility of the underlying and $W_t$ is a standard Brownian motion.
\begin{remark}
It has to be pointed that the choice of measure plays an important role in the valuation of contingent claims.  If there are no restrictions on  short selling,  the market is complete and there exists a unique  equivalent martingale measure $\mathbb{Q}$. Under this unique equivalent martingale measure, the parameter $\mu$ in the dynamics (\ref{BS}) should be replaced by $r$. The price of a  European contingent claim that expires at time $T$ with the payoff  $Z(S)\in \mathbb{L}^2(\Omega,\mathcal{F},\mathbb{Q})$ can be easily calculated as $v=\mathbf{E}_{\mathbb{Q}}\left[e^{-rT}Z(S_T)\right]$. This price is accepted by both the seller and the buyer since both are able to perfectly replicate the claim  using self-financing trading strategies.
\end{remark}
When the short selling  is completed banned,  the market becomes incomplete as perfect replication strategies no longer exist for some contingent claims. In this case, an admissible trading strategy is a progressively measurable non-negative process $\phi_t$, which represents the number of stock held at time $t$. Given an initial wealth $v$, consider the following self-financing trading strategy: hold $\phi_t$ shares of stock at time $t$ and keep  the remaining wealth in  the risk-free asset. Then the wealth process, denoted by $v_t$, follows
\begin{equation}\label{hedging_account1}
  d v_t=\underbrace{d(\phi_tS_t)}_{\text{stock account}}+\underbrace{d(v_t-\phi_tS_t)}_{\text{risk-free  account}}=\phi_tdS_t+r(v_t-\phi_tS_t)dt=[rv_t+(\mu-r)\phi S_t]dt+\phi_t\sigma S_tdW_t,
\end{equation}
where $\phi_t$  belongs to the set of all progressively measurable, non-negative and square integrable  trading strategies
\begin{equation}\label{admissable set}
\mathbf{\Phi}:=\left\{\phi(t,\omega):[0,T]\times \Omega\rightarrow R^+ \middle|\quad  \mathbf{E}_{\mathbb{P}}\left[\int_0^T\phi^2(t,\omega)dt\right]<\i\right\}.
\end{equation}

\subsection{The equal-risk price of general contingent claims}
Under the short selling ban, the market is incomplete. In this case, any unilateral utility-based pricing approach results in a price interval. Any price that lies strictly within this interval leads to a scenario in which both the buyer and the seller face some level of risks. Intuitively, a higher price implies a high risk exposure for the buyer; while a lower price implies a higher risk exposure for the seller.  \cite{guo2017equal} proposed a criterion to determine an equal-risk price which distributes the risk between the buyer and the seller equally. To measure the risk exposures, they introduced the following risk function.
\begin{definition}\label{def1}
A function $R:\mathbb{R}\rightarrow \mathbb{R}$ is called a \textit{risk function} if it satisfies the following conditions:
\begin{enumerate}
  \item $R(x)$ is non-decreasing, convex and has a finite lower bound $L^B$.
  \item $R(0)=0$ and $R(x)>0$ for all $x>0$.
\end{enumerate}
\end{definition}
\begin{remark}
It is obvious that both $R_1(x)=x^+$ and $R_2(x)=e^x-1$ are risk functions. The former is adopted by \cite{guo2017equal}, while the latter is the one we choose in this paper.    The smoothness of $R_2(x)$  facilitates the  derivation of pricing formula.
\end{remark}

Suppose an investor  sells one  European contingent claim with the payoff  $Z(S_T)$ at the price $v$.  After receiving the payment, the seller  establishes an hedging account with the initial wealth $v$ to hedge his future liability  $Z(S_T)$. The terminal wealth of the hedging account $v_T$ is used to reduce the risk. As a result, the minimum  risk exposure for the seller at expiry is defined by
\begin{equation}\label{seller_risk}
 \rho^s(S,v;Z)= \inf\limits_{\phi(\cdot)\in \mathbf{\Phi}}  \mathbf{E}_{\mathbb{P}}^{S,v}\left[R\left(Z(S_T)-v_T^{v,\phi(\cdot)}\right)\right],
\end{equation}
where $\mathbb{E}_{\mathbb{P}}^{S,v}$ denotes the conditional expectation under the original physical measure $\mathbb{P}$ with $S_0=S,v_0=v$ and $v_t^{v,\phi(\cdot)}$ is the solution of Equation (\ref{hedging_account1}) given the trading strategy $\phi(\cdot)$  and the initial wealth $v$.

To calculate the minimum risk exposure for the seller, it suffices to solve an optimal stochastic control problem with the objective function (\ref{seller_risk}) and  the dynamics of $S_t$ and $v_t$. By the dynamic programming principle, the value function $F^s(t,S,v)$ satisfies
\beq
\left\{
\begin{array}{l}
\ds  0=\frac{\partial F^s}{\partial t}+\inf\limits_{\phi\geq 0}\left\{\mathcal{L}_1^{\phi}F^s\right\},\label{HJB_seller} \\
  \ds F^s(T,S,v)=R\left(Z(S)-v\right),
\end{array}\right.
\eeq
where
\begin{equation}
  \mathcal{L}_1^{\phi}F=\frac{1}{2}S^2\sigma^2\frac{\partial^2 F}{\partial S^2}+\phi S^2\sigma^2 \frac{\partial^2 F}{\partial S\partial v}+\frac{1}{2}S^2\sigma^2 \phi^2\frac{\partial^2 F}{\partial v^2}+\mu S\frac{\partial F}{\partial S}+(rv+(\mu-r)\phi S)\frac{\partial F}{\partial v}.
\end{equation}
The minimum risk exposure for the seller is then given by $\rho^s(S,v;Z)=F^s(0,S,v)$.

A similar  analysis can be applied to the buyer. Assume the buyer pays $v$ for the European contingent claim $Z(S_T)$. To finance his payment, the buyer  borrows $v$ at time 0,  which results in a liability of $ve^{rT}$ at expiry. The buyer  establishes a hedging account with a hedging strategy $\phi(\cdot)$ with the zero initial value.  Then the minimum risk exposure for the buyer is defined by
\begin{equation}\label{buyer_risk}
  \rho^b(S,v;Z)= \inf\limits_{\phi(\cdot)\in \mathbf{\Phi}}  \mathbf{E}_{\mathbb{P}}^{v,S}\left[R\left(ve^{rT}-v_T^{0,\phi(\cdot)}-Z(S_T)\right)\right]=\inf\limits_{\phi(\cdot)\in \mathbf{\Phi}}  \mathbf{E}_{\mathbb{P}}^{v,S}\left[R\left(v_T^{v,-\phi(\cdot)}-Z(S_T)\right)\right].
\end{equation}
The associate  HJB equation governing the value function $F^b(t,S,v)$ is given by
\beq
\left\{
\begin{array}{l}
\ds  0=\frac{\partial F^b}{\partial t}+\inf\limits_{\phi\geq 0}\left\{\mathcal{L}_2^{\phi}F^b\right\},\label{HJB_buyer} \\
  \ds F^b(T,S,v)=R\left(v-Z(S)\right),
\end{array}\right.
\eeq
where
\begin{equation}
  \mathcal{L}_2^{\phi}F=\frac{1}{2}S^2\sigma^2\frac{\partial^2 F}{\partial S^2}-\phi S^2\sigma^2 \frac{\partial^2 F}{\partial S\partial v}+\frac{1}{2}S^2\sigma^2 \phi^2\frac{\partial^2 F}{\partial v^2}+\mu  S\frac{\partial F}{\partial S}+(rv+(\mu-r)\phi S)\frac{\partial F}{\partial v}.
\end{equation}
The minimum risk exposure for the buyer is then given by $\rho^b(S,v;Z)=F^b(0,S,v)$.

In order to ensure that the  optimal control problems (\ref{seller_risk}) and (\ref{buyer_risk}) are well-posed, some  conditions must be imposed on the utility function $R(x)$ and the admissible set $\Phi$.    In this paper, we make the following assumption\footnote{ Readers who are interested in these conditions,  are referred to \cite{fleming2006controlled,ma2019optimal}.}.
\begin{assumption}\label{assumption1}
Given the risk function $R(x)$ and the payoff  $Z(S)$, there exists an admissible strategy $\phi(\cdot)$ such that $R\left(Z(S_T)-v_T^{v,\phi(\cdot)}\right)$ and $R\left(v_T^{v,-\phi(\cdot)}-Z(S_T)\right)$ are square integrable.
\end{assumption}

In fact, there exists a direct relationship between the  risk exposures for  the seller and the  buyer. In terms of financial interpretation, a buyer who purchases a European contingent claim $Z(S)$ at a price $v$ is equivalent to a seller who sells a contingent claim $-Z(S)$ at the price of $-v$ because they have identical cash flows. Therefore, they should face the same risks. Mathematically, it is expressed as
\begin{equation}\label{equivalence}
  \rho^b(S,v;Z)=\rho^s(S,-v;-Z).
\end{equation}
This relation plays an important role in the rest of this paper.

The following lemma provides some useful properties of the risk function.
\begin{lemma}\label{lemma1} Assume that $Z,Z_1,Z_2$ are square integrable, $\mathcal{F}_T$-measurable random variables. The monotonicity and limiting behavior of risk functions $\rho^s(S,v;Z)$ and $\rho^b(S,v;Z)$ are described as follows:
\begin{enumerate}
  \item
  If $Z_1\leq Z_2$, then $\rho^s(S,v;Z_1)\leq\rho^s(S,v;Z_2)$ and $\rho^b(S,v;Z_1)\geq\rho^b(S,v;Z_2)$.\\
  If $v_1\leq v_2$, then $\rho^s(S,v_1;Z)\geq\rho^s(S,v_2;Z)$ and $\rho^b(S,v_1;Z)\leq\rho^b(S,v_2;Z)$.
  \item As $v$ tends toward $\i$ or $-\i$, the asymptotic behavior of the risk functions are given by
\begin{align}
\lim\limits_{v\rightarrow\i}\rho^s(S,v;Z)&=L^B,\lim\limits_{v\rightarrow\i}\rho^b(S,v;Z)=\i,\nn\\
\lim\limits_{v\rightarrow -\i}\rho^s(S,v;Z)&=\i,\lim\limits_{v\rightarrow -\i}\rho^b(S,v;Z)=L^B\nn,
\end{align}
where $L^B$  represents the lower bound of the utility function $R(x)$.
\end{enumerate}
\end{lemma}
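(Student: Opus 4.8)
The plan is to reduce every assertion to a statement about the seller's functional $\rho^s$, using two structural facts. The first is the affine dependence of terminal wealth on the initial endowment: since \eqref{hedging_account1} is linear in $v_t$ with homogeneous growth rate $r$, its solution splits as
\[
v_T^{v,\phi(\cdot)} = v\,e^{rT} + v_T^{0,\phi(\cdot)},
\]
where $v_T^{0,\phi(\cdot)}$ is independent of $v$. The second is the buyer--seller duality \eqref{equivalence}, $\rho^b(S,v;Z)=\rho^s(S,-v;-Z)$, which lets me transfer each statement about $\rho^b$ to its seller counterpart.

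Part 1 I would dispatch first, as it is the easiest. If $Z_1\le Z_2$, then for every fixed $\phi(\cdot)$ and every path, monotonicity of $R$ gives $R(Z_1(S_T)-v_T^{v,\phi})\le R(Z_2(S_T)-v_T^{v,\phi})$; taking $\mathbb{P}$-expectations and then the infimum over $\phi(\cdot)$ yields $\rho^s(S,v;Z_1)\le\rho^s(S,v;Z_2)$. If $v_1\le v_2$, the affine splitting shows $v_T^{v_1,\phi}\le v_T^{v_2,\phi}$ pathwise, so the shortfall $Z(S_T)-v_T^{v,\phi}$ is pointwise non-increasing in $v$ and monotonicity of $R$ gives $\rho^s(S,v_1;Z)\ge\rho^s(S,v_2;Z)$. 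The two buyer inequalities then follow by applying these to $-Z$ and $-v$ through \eqref{equivalence}.

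For Part 2 I would again treat only the seller and recover the buyer limits from \eqref{equivalence} (for instance $\lim_{v\to-\infty}\rho^b(S,v;Z)=\lim_{w\to\infty}\rho^s(S,w;-Z)=L^B$). Two of the seller limits are soft: since $R\ge L^B$ we always have $\rho^s\ge L^B$, and for the matching bound as $v\to\infty$ I fix one admissible strategy $\phi_0$ supplied by Assumption \ref{assumption1} and use the same $\phi_0$ for all large $v$. The argument $Z(S_T)-v_T^{v,\phi_0}=(Z(S_T)-v_T^{0,\phi_0})-v\,e^{rT}\to-\infty$ a.s., and $R$ is non-decreasing with $\lim_{x\to-\infty}R(x)=\inf R=L^B$, so the integrand decreases to $L^B$ from an integrable starting value; monotone convergence gives $\rho^s(S,v;Z)\le\mathbf{E}_{\mathbb{P}}[R(Z(S_T)-v_T^{v,\phi_0})]\to L^B$, hence equality with the lower bound.

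The genuinely delicate case is $\lim_{v\to-\infty}\rho^s(S,v;Z)=\infty$: here the infimum over $\phi(\cdot)$ might compensate the large debt $-v\,e^{rT}$ by a strategy with large terminal wealth, and a naive Jensen bound under $\mathbb{P}$ fails because $\mathbf{E}_{\mathbb{P}}[v_T^{0,\phi}]$ is unbounded over $\mathbf{\Phi}$. The resolution is to pass to the equivalent martingale measure $\mathbb{Q}$, under which $v_T^{0,\phi(\cdot)}$ is a martingale with $\mathbf{E}_{\mathbb{Q}}[v_T^{0,\phi}]=0$, so the mean shortfall $\mathbf{E}_{\mathbb{Q}}[Z(S_T)]-v\,e^{rT}\to\infty$ cannot be hedged away on average. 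I would make this quantitative on the concrete choice $R_2(x)=e^x-1$ ($L^B=-1$), where the exponential factorises the endowment:
\[
\rho^s(S,v;Z)=e^{-v e^{rT}}\inf_{\phi(\cdot)\in\mathbf{\Phi}}\mathbf{E}_{\mathbb{P}}\!\left[e^{Z(S_T)-v_T^{0,\phi(\cdot)}}\right]-1=:e^{-v e^{rT}}\,C-1.
\]
Rewriting the $\mathbb{P}$-expectation under $\mathbb{Q}$ via the Girsanov density and applying Jensen to the convex map $e^{(\cdot)}$ together with $\mathbf{E}_{\mathbb{Q}}[v_T^{0,\phi}]=0$ gives the uniform bound $C\ge\exp(\mathbf{E}_{\mathbb{Q}}[Z(S_T)]-\tfrac12\theta^2 T)>0$ with $\theta=(\mu-r)/\sigma$, while Assumption \ref{assumption1} keeps $C<\infty$. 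This single factorisation simultaneously recovers monotonicity in $v$ and both seller limits, since $e^{-v e^{rT}}\to 0$ as $v\to\infty$ returns $L^B=-1$ and $e^{-v e^{rT}}\to\infty$ as $v\to-\infty$ together with $C>0$ forces $\rho^s\to\infty$. I expect establishing $C>0$ to be the crux, with the $\mathbb{Q}$-martingale property of $v_T^{0,\phi}$ being precisely what rules out the degenerate possibility $C=0$.
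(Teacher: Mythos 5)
Your Part 1 and the two ``soft'' limits follow essentially the paper's own route: pathwise monotonicity of $R$ plus the affine splitting $v_T^{v,\phi}=ve^{rT}+v_T^{0,\phi}$, the duality (\ref{equivalence}) to transfer everything to the seller, and a single fixed strategy from Assumption \ref{assumption1} for the limit $\rho^s\to L^B$. In fact you are more careful than the paper here: the appendix effectively \emph{defines} $L^B$ as the limit $\lim_{v\to\infty}\mathbf{E}[R(Z(S_T)-v_T^{v,\phi})]$ without justifying that this limit exists and equals $\inf R$, whereas your monotone-convergence argument (the integrand decreases pathwise to $\lim_{x\to-\infty}R(x)=\inf R=L^B$ from an integrable majorant) is the correct way to make that step rigorous. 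Note that both you and the paper implicitly read ``$L^B$'' as the infimum of $R$, not merely \emph{a} lower bound as Definition \ref{def1} literally says; for a non-decreasing $R$ this identification is what the lemma needs.

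For the ``hard'' limit you genuinely diverge from the paper, and this is where there is both a contribution and a gap. The paper's proof simply applies Jensen to $R$ under $\mathbb{Q}$, using $\mathbf{E}_{\mathbb{Q}}[v_T^{v,-\phi}]=ve^{rT}$ to get the uniform bound $R(ve^{rT}-\mathbf{E}_{\mathbb{Q}}Z(S_T))\to\infty$ for \emph{general} risk functions $R$ --- one line, valid for any admissible $\phi$ because the discounted hedge account is a $\mathbb{Q}$-martingale. (This bound also needs convexity of $R$ together with $R(0)=0<R(x)$ for $x>0$ to force $R(x)\to\infty$, a point worth stating.) You correctly observe that the risk functionals in (\ref{seller_risk}) and (\ref{buyer_risk}) are defined under $\mathbb{P}$, that the naive Jensen bound then fails when $\mu\neq r$, and that the paper's appendix silently switches to $\mathbf{E}_{\mathbb{Q}}$ --- a measure inconsistency in the paper that your Girsanov argument, with the density correction $-\tfrac12\theta^2T$ and the bound $C\geq\exp(\mathbf{E}_{\mathbb{Q}}[Z(S_T)]-\tfrac12\theta^2T)>0$, actually repairs. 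The gap is that your quantitative argument relies on the multiplicative factorisation $e^{Z-v_T^{v,\phi}}=e^{-ve^{rT}}e^{Z-v_T^{0,\phi}}$, which is special to $R_2(x)=e^x-1$, while Lemma \ref{lemma1} (and its use in Theorem \ref{thm1}) is stated for arbitrary risk functions; your proposal does not prove the hard limits at that generality, and under $\mathbb{P}$ with general $R$ the measure-change trick does not factorise, so something more (or a retreat to the $\mathbb{Q}$-formulation, where your own Jensen ingredient closes the general case immediately) is required. Two smaller shared caveats: the claim $\mathbf{E}_{\mathbb{Q}}[v_T^{0,\phi}]=0$ needs $v^{0,\phi}$ to be a true (not just local) $\mathbb{Q}$-martingale, which does not follow automatically from the $\mathbb{P}$-square-integrability in (\ref{admissable set}); and you should verify that your constant $C$ is finite, which Assumption \ref{assumption1} supplies as you note. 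Neither caveat is addressed by the paper either.
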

\begin{proof}
The proof of Lemma \ref{lemma1} is given  in Appendix \ref{appendix_1}.
\end{proof}
We adopt   the definition of the \textit{equal-risk price} provided by \cite{guo2017equal}.
\begin{definition}\label{def_equal}  Consider a European contingent claim with the payoff   $Z(S_T)$.
The equal-risk price of this claim, denoted by $\bar{v}(S)$ where $S$ is the time 0 value of the underlying stock, is  a constant under which both the seller and the buyer face the same amount of risk, i.e.
\begin{equation}\label{equal_risk}
 \rho^s(S,\bar{v}(S);Z)=\rho^b(S,\bar{v}(S);Z).
\end{equation}
\end{definition}
In order to demonstrate that the equal-risk price is well-defined, the following theorem states its existence and uniqueness.
\begin{theorem}\label{thm1}
Consider a market where the stock follows the Black-Scholes model and short selling is banned. For a European contingent claim $Z(S_T)$, there exists a unique equal-risk price $\bar{v}(S)$ that satisfies the following equation,
\begin{equation}\label{equal_risk1}
 \rho^s(S,\bar{v}(S);Z)=\rho^b(S,\bar{v}(S);Z).
\end{equation}
\end{theorem}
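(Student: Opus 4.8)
The plan is to fix the time-$0$ stock value $S$ and study the single-variable function
\[
g(v) := \rho^s(S,v;Z) - \rho^b(S,v;Z),
\]
showing it has exactly one zero. Part 1 of Lemma \ref{lemma1} already tells us that $\rho^s(S,\cdot\,;Z)$ is non-increasing while $\rho^b(S,\cdot\,;Z)$ is non-decreasing in $v$, so $g$ is non-increasing; part 2 supplies the boundary behaviour
\[
\lim_{v\to-\infty} g(v) = +\infty, \qquad \lim_{v\to+\infty} g(v) = -\infty.
\]
Thus, provided $g$ is continuous, the existence of a root $\bar v(S)$ follows at once from the intermediate value theorem, and it remains to upgrade weak to strict monotonicity for uniqueness.

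The key tool I would use is a \emph{translation property} of the wealth process. Writing two copies of the SDE (\ref{hedging_account1}) driven by the same strategy $\phi$ but started from $v$ and $v+c$, the difference $u_t$ solves $du_t = r\,u_t\,dt$ with $u_0=c$, so $u_T = c\,e^{rT}$; that is,
\[
v_T^{v+c,\phi} = v_T^{v,\phi} + c\,e^{rT},
\]
and the identical cancellation of the $S$-dependent terms yields $v_T^{v+c,-\phi}=v_T^{v,-\phi}+c\,e^{rT}$ for the buyer as well. Feeding this into the definitions (\ref{seller_risk}) and (\ref{buyer_risk}) and using that $R$ is Lipschitz on compact sets together with the square-integrability of Assumption \ref{assumption1}, I would conclude that $\rho^s$ and $\rho^b$, hence $g$, are continuous (indeed locally Lipschitz) in $v$, securing the existence half.

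The main obstacle is \textbf{uniqueness}, because Lemma \ref{lemma1} gives only weak monotonicity and a non-increasing $g$ could in principle vanish on an entire interval. Here I would exploit the specific risk function $R_2(x)=e^x-1$ adopted in this paper. Since $R_2(x)+1=e^{x}$, the translation property converts the initial wealth into a multiplicative factor: taking $c=v$ relative to the reference point $v=0$,
\[
\rho^s(S,v;Z)+1 = e^{-v\,e^{rT}}\bigl(\rho^s(S,0;Z)+1\bigr), \qquad \rho^b(S,v;Z)+1 = e^{\,v\,e^{rT}}\bigl(\rho^b(S,0;Z)+1\bigr).
\]
Both prefactors are strictly positive, being infima of expectations of strictly positive random variables, so $\rho^s$ is \emph{strictly} decreasing and $\rho^b$ \emph{strictly} increasing in $v$; therefore $g$ is strictly decreasing and its root is unique. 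In fact $g(\bar v)=0$ can be solved explicitly,
\[
\bar v(S) = \frac{e^{-rT}}{2}\,\ln\frac{\rho^s(S,0;Z)+1}{\rho^b(S,0;Z)+1},
\]
which settles existence and uniqueness simultaneously. I would present the intermediate-value argument first and then invoke this explicit computation to clinch strict monotonicity, using the symmetry relation (\ref{equivalence}) as a consistency check on the two formulae.
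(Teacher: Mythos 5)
Your skeleton (intermediate value theorem for existence, strict monotonicity for uniqueness) matches the paper's Appendix B, which works with $H(v)=\rho^b(S,v;Z)-\rho^s(S,v;Z)$, cites Lemma \ref{lemma1} for monotonicity and the limits at $\pm\infty$, and asserts continuity (which the paper also treats lightly; for a general risk function it follows from convexity of $\rho^s,\rho^b$ in $v$, since $v_T^{v,\phi}$ is affine in $(v,\phi)$ and the infimum over the convex set $\mathbf{\Phi}$ of a jointly convex functional is convex, hence continuous where finite). Your translation identity $v_T^{v+c,\phi}=v_T^{v,\phi}+c\,e^{rT}$ is correct, and for $R_2(x)=e^x-1$ it yields the multiplicative factorization and the closed form $\bar v(S)=\frac{e^{-rT}}{2}\ln\frac{\rho^s(S,0;Z)+1}{\rho^b(S,0;Z)+1}$, which indeed reproduces (\ref{equal_risk_price_call}) for $Z=(S_T-K)^+$; in the exponential case this is a genuinely different and sharper uniqueness route than the paper's, and it also makes your shaky ``Lipschitz on compact sets'' continuity argument unnecessary, since the factorization gives continuity in $v$ outright.

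There are, however, two genuine gaps. First, the crux of your uniqueness step --- strict positivity of $\rho^s(S,0;Z)+1$ and $\rho^b(S,0;Z)+1$ --- is justified by a non sequitur: an infimum of expectations of strictly positive random variables can perfectly well equal zero. This is not pedantry; it is exactly the degeneracy the proof must exclude, because if, say, $\rho^s(S,0;Z)+1=0$ then your own factorization gives $\rho^s\equiv -1$ for all $v$, strict monotonicity of $g$ fails, and in fact no root exists (the equation $\rho^b(S,v;Z)=-1$ holds only in the limit $v\to-\infty$). The fix is measure-dependent: under $\mathbb{Q}$ the discounted wealth is a martingale, so $\mathbf{E}_{\mathbb{Q}}[v_T^{0,\phi}]=0$ and Jensen gives the uniform bound $\mathbf{E}_{\mathbb{Q}}\bigl[e^{Z-v_T^{0,\phi}}\bigr]\geq e^{\mathbf{E}_{\mathbb{Q}}[Z]}>0$; under $\mathbb{P}$ with $\mu\neq r$ Jensen no longer suffices (the drift term $(\mu-r)\phi S$ makes $\mathbf{E}_{\mathbb{P}}[v_T^{0,\phi}]$ unbounded over $\mathbf{\Phi}$), and one needs, e.g., the entropic inequality $\mathbf{E}_{\mathbb{P}}[e^X]\geq \exp\bigl(\mathbf{E}_{\mathbb{Q}}[X]-H(\mathbb{Q}\,|\,\mathbb{P})\bigr)$ with $H(\mathbb{Q}\,|\,\mathbb{P})=\frac{(\mu-r)^2T}{2\sigma^2}$. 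Second, your argument proves the theorem only for $R_2$, whereas the statement is for an arbitrary risk function of Definition \ref{def1}; the paper invokes it with $R_1(x)=x^+$ in Theorem \ref{thm2} and Corollary \ref{coro3}, and $R_1$ admits no multiplicative translation property. The paper's uniqueness argument covers the general case differently: if $v_1>v_2$ were two roots, weak monotonicity forces $\rho^b(S,v_1;Z)=\rho^b(S,v_2;Z)$, convexity then forces $\rho^b(S,\cdot\,;Z)$ to be constant, equal to $L^B\leq 0$, on $(-\infty,v_1]$, and Jensen together with $R(0)=0$ and $R(x)>0$ for $x>0$ yields $v_1e^{rT}\leq \mathbf{E}_{\mathbb{Q}}[Z]\leq v_2e^{rT}$, a contradiction. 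To prove the theorem as stated you would need to add this convexity argument (or an analogue) for general $R$, or weaken the statement to the exponential risk function.
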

\begin{proof}
The proof of this theorem is given in Appendix \ref{appendix_2}.
\end{proof}

In summary, the equal-risk pricing approach (ERP) consists of two steps. First, we calculate the minimum risk exposure for the seller and the buyer respectively through solving the associate stochastic optimal control problems.  In the second step, the equal-risk price is found by solving (\ref{equal_risk}). Mathematically, the idea of equal-risk pricing approach seems like the classical exponential hedging problem: i.e. the superreplication under Gamma constraints, or under delta constraints approach \citep{Bouchard2016Hedging}. In nature, they are two different problem: the former focuses on pricing; while the latter aims at hedging risk. To solve the associate stochastic control problems, we need to solve the HJB equations (\ref{HJB_seller}) and (\ref{HJB_buyer}). In some special cases, these HJB equations can be solved analytically and the pricing formula for the equal-risk price can be derived easily. However, for general claims, analytical solutions  is unavailable and hence we will provide the corresponding numerical schemes.

\section{Equal-risk prices under the equivalent martingale measure}
In this section, we focus on the case where the expected risks are computed under the equivalent martingale measure $\mathbb{Q}$.
We will recover the analytical pricing formulae for European call and put options under the short selling ban from \cite{guo2017equal} and compare them to Black-Scholes prices. Moreover, we will further extend the results of \cite{guo2017equal} to payoffs that are not necessarily monotonic, and provide an analytical formula for general contingent claims.

Since we are working under the martingale measure $\mathbb{Q}$, the parameter $\mu$ in both (\ref{HJB_seller}) and (\ref{HJB_buyer}) is replaced by $r$. 
We derive the minimum risk exposure for the seller and the buyer as follows.
\begin{proposition}\label{proposition1}
When the contingent claim is a European call option with the payoff  $Z(S)=(S-K)^+$, the minimum risk exposure for the seller is
\begin{equation}\label{seller_risk_value}
 \rho^s(S,v;Z) = R\left(e^{rT}[C^{BS}(S,K,r,\sigma,T)-v]\right),
\end{equation}
where $C^{BS}(S,K,r,\sigma,T)$ is the  Black-Scholes formula for a European call option with underlying price $S$, strike price $K$, and time to expiration $T-t$.
\end{proposition}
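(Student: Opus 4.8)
The plan is to solve the HJB equation (\ref{HJB_seller}) explicitly under $\mathbb{Q}$ (so that $\mu=r$) by a verification argument built on an educated ansatz. The guiding intuition is financial: a European call has Black--Scholes delta $\partial C^{BS}/\partial S\in[0,1]$, which is non-negative, so the short-selling constraint $\phi\ge 0$ should never bind and the seller can replicate the claim exactly as in the complete market. This suggests the candidate value function
\[
F^s(t,S,v) = R\!\left(e^{r(T-t)}\big[C^{BS}(S,K,r,\sigma,T-t) - v\big]\right).
\]
I would first verify the terminal condition: at $t=T$ the exponential factor is $1$ and $C^{BS}$ reduces to $(S-K)^+=Z(S)$, so $F^s(T,S,v)=R(Z(S)-v)$, matching (\ref{HJB_seller}).

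Next I would substitute the ansatz into $\mathcal{L}_1^{\phi}F^s$. Writing $g=e^{r(T-t)}[C^{BS}-v]$ and differentiating through $R$, one finds $\partial_{Sv}F^s=-R''(g)e^{2r(T-t)}\partial_S C^{BS}$ and $\partial_{vv}F^s=R''(g)e^{2r(T-t)}$, so the $\phi$-dependent part of $\mathcal{L}_1^{\phi}F^s$ collects into $\tfrac12 S^2\sigma^2 R''(g)e^{2r(T-t)}\big[\phi^2-2\phi\,\partial_S C^{BS}\big]$. Since $R$ is convex, $R''(g)\ge 0$, and this is an upward parabola in $\phi$ minimized at $\phi^\ast=\partial_S C^{BS}$. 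The key step is the observation that $\phi^\ast=\partial_S C^{BS}\in[0,1]$ is non-negative, so the constraint $\phi\ge 0$ is inactive and the unconstrained minimizer is admissible; at $\phi^\ast$ the quadratic contributes $-\tfrac12 S^2\sigma^2 R''(g)e^{2r(T-t)}(\partial_S C^{BS})^2$, which cancels exactly the $R''$ term coming from $\partial_{SS}F^s$.

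After this cancellation only the $R'(g)$ terms survive, and factoring out $R'(g)e^{r(T-t)}$ reduces $\inf_{\phi\ge0}\mathcal{L}_1^{\phi}F^s$ to $R'(g)e^{r(T-t)}\big[\tfrac12\sigma^2 S^2 C^{BS}_{SS}+rS C^{BS}_S - rv\big]$. Invoking the Black--Scholes PDE $C^{BS}_t+\tfrac12\sigma^2 S^2 C^{BS}_{SS}+rS C^{BS}_S-rC^{BS}=0$ rewrites the bracket as $rC^{BS}-C^{BS}_t-rv$; a direct computation of $\partial_t F^s=R'(g)e^{r(T-t)}[C^{BS}_t-r(C^{BS}-v)]$ then shows the two contributions are exactly negatives of one another, so the HJB equation holds. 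Evaluating the verified solution at $t=0$ yields $\rho^s(S,v;Z)=F^s(0,S,v)=R(e^{rT}[C^{BS}(S,K,r,\sigma,T)-v])$, the claimed formula (\ref{seller_risk_value}).

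The main obstacle is the verification step rather than the algebra: I must confirm that $F^s$ is genuinely the value function and not merely a classical solution of the HJB equation. This requires checking that the optimizer $\phi^\ast=\partial_S C^{BS}$ lies in the admissible set $\mathbf{\Phi}$ (non-negativity is precisely the point above, and square-integrability follows from the boundedness of the delta together with Assumption \ref{assumption1}), and a standard verification theorem, where convexity of $R$ guarantees the infimum is attained and the associated diffusion is well-behaved. The non-negativity of the call delta is exactly what renders the short-selling ban harmless here, and it is the hypothesis that would fail for payoffs whose replicating strategy requires shorting.
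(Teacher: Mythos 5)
Your proposal is correct and takes essentially the same route as the paper: the identical trial solution $F^s(t,S,v)=R\left(e^{r(T-t)}[C^{BS}(S,K,r,\sigma,T-t)-v]\right)$, the same pointwise minimization in $\phi$ yielding $\phi^\ast=\partial_S C^{BS}\ge 0$ (so the short-selling constraint is inactive), and the same cancellation via the Black--Scholes PDE. The only differences are cosmetic: you write $R'$ and $R''$ where the paper uses the one-sided derivatives $D^+R$ and $D^{2+}R$ to accommodate non-smooth risk functions such as $R_1(x)=x^+$, and your explicit attention to the verification step (that the classical solution is genuinely the value function, with $\phi^\ast\in\mathbf{\Phi}$) is in fact slightly more careful than the paper, which passes over it.
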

\begin{proof}
In order to derive the minimum risk exposure  for the seller, we  focus on the HJB equation (\ref{HJB_seller}) with $Z=(S-K)^+$. Consider the following trial solution to the PDE system (\ref{HJB_seller}),
\begin{equation}\label{trial}
  F^s(t,S,v)=R\left(e^{r(T-t)}[C^{BS}(S,K,r,\sigma,T-t)-v]\right).
\end{equation}
It is easy to verify that
\begin{displaymath}
\begin{array}{llllll}
   \frac{\partial F^s}{\partial t}  & = &  e^{r(T-t)}(\frac{\partial C^{BS}}{\partial t}-rC^{BS}+rv) D^+R(x), & \frac{\partial F^s}{\partial S} &= &  e^{r(T-t)}\frac{\partial C^{BS}}{\partial S}D^+R(x), \\
\frac{\partial F^s}{\partial v} & = &  -e^{r(T-t)}D^+R(x), &  \frac{\partial^2 F^s}{\partial S\partial v} &= & -e^{2r(T-t)} \frac{\partial C^{BS}}{\partial S}D^{2+}R(x), \\
 \frac{\partial^2 F^s}{\partial S^2}& = & e^{2r(T-t)}(\frac{\partial C^{BS}}{\partial S})^2D^{2+}R(x)+e^{r(T-t)}\frac{\partial^2 C^{BS}}{\partial S^2}D^+R(x), &   \frac{\partial^2 F^s}{\partial v^2} &= & e^{2r(T-t)}D^{2+}R(x),
\end{array}
\end{displaymath}
where $D^+R(x)$ and $(D^{2+})R(x)$ represent the first  and second order right derivatives. Based on the convexity of function $\mathcal{L}_1F$ with respect to $\phi$, the optimal hedging strategy is
\begin{equation}\label{optimal_phi1}
  \phi^*=\max\left\{-\frac{\partial^2 F^s}{\partial S\partial v}/ (\frac{\partial^2 F^s}{\partial v^2}),0\right\}=\max\left\{\frac{\partial C^{BS}}{\partial S},0\right\}.
\end{equation}
The Delta of a European call option $\ds\frac{\partial C^{BS}}{\partial S}$ is non-negative, which implies $\ds \phi^*=\frac{\partial C^{BS}}{\partial S}$.
After substituting $\phi^*$ back into the HJB equation (\ref{HJB_seller}), we have
\begin{align}
   &\frac{\partial F^s}{\partial t}+\inf\limits_{\phi \geq 0}\left\{\frac{1}{2}\sigma^2S^2\frac{\partial^2 F^s}{\partial S^2}+\phi S^2\sigma^2 \frac{\partial^2 F^s}{\partial S \partial v}+\frac{1}{2}S^2\sigma^2 \phi^2\frac{\partial^2 F^s}{\partial v^2}+rS\frac{\partial F^s}{\partial S}+rv\frac{\partial F^s}{\partial v}\right\}\nn\\
&=e^{r(T-t)}\left[\frac{\partial C^{BS}}{\partial S}+\frac{1}{2}\sigma^2S^2\frac{\partial^2 C^{BS}}{\partial S^2}+rS\frac{\partial C^{BS}}{\partial S}-rC^{BS}\right]\frac{\partial R}{\partial x},\nn\\
&=0\label{demo1}.
\end{align}
The last equation holds just because  $C^{BS}$ satisfies the Black-Scholes PDE.
Consequently, the trial solution (\ref{trial}) is indeed a solution to the HJB equation (\ref{HJB_seller}). Therefore, the minimum risk exposure for the seller can be expressed by (\ref{seller_risk_value}).
\end{proof}
\begin{remark}\label{remark3}
The seller of a European call option  adopts the same optimal hedging strategy as the classical Black-Scholes model, i.e. $\phi^*=\frac{\partial C^{BS}}{\partial S}$, which implies  that the short selling ban  does not affect his hedging strategy. This follows form the fact that the price of a European call option in the classical Black-Scholes model is non-decreasing with respect to the underlying.
\end{remark}
\begin{proposition}\label{proposition2}
When the contingent claim is a European call option with the payoff  $Z(S)=(S-K)^+$, the  minimum risk exposure for the buyer is
\begin{equation}\label{buyer_risk_value1}
  \rho^b(S,v;Z)=\frac{1}{\sqrt{2\pi}}\int_{-\i}^{\i}R\left(ve^{rT}-(Se^{(r-\frac{\sigma^2}{2})T+\sigma\sqrt{T}x}-K)^+\right)e^{-\frac{x^2}{2}}dx.
\end{equation}
\end{proposition}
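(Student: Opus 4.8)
The plan is to guess the optimal control and the corresponding value function, then verify it against the HJB equation (\ref{HJB_buyer}) exactly as in the proof of Proposition \ref{proposition1}. The financial intuition is decisive here: in the buyer's hedging account the stock position equals $-\phi$ with $\phi\geq 0$, so the buyer may only hold non-positive amounts of stock. Since the long-call target $(S-K)^+$ is non-decreasing in $S$, any short stock position only widens the mismatch, so the buyer should not trade at all. I therefore conjecture the optimal strategy $\phi^*\equiv 0$ together with the candidate value function
\[
F^b(t,S,v)=\mathbf{E}_{\mathbb{Q}}^{t,S}\left[R\left(ve^{r(T-t)}-(S_T-K)^+\right)\right],
\]
where under $\mathbb{Q}$ the stock satisfies $dS_u=rS_u\,du+\sigma S_u\,dW_u$; evaluating at $t=0$ with $S_T=Se^{(r-\sigma^2/2)T+\sigma\sqrt{T}x}$ reproduces the stated integral (\ref{buyer_risk_value1}).

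First I would check that, with $\phi$ frozen at $0$, the candidate satisfies $\frac{\partial F^b}{\partial t}+\mathcal{L}_2^{0}F^b=0$ together with the terminal condition $F^b(T,S,v)=R(v-(S-K)^+)$. Setting $\phi=0$ annihilates every $\phi$-term in $\mathcal{L}_2^{\phi}$, and what remains is precisely the backward parabolic operator whose solution is the conditional expectation above, the $v$-variable entering only through the deterministic growth $v\mapsto ve^{r(T-t)}$. This step is essentially automatic by Feynman--Kac, modulo the integrability needed to differentiate under the expectation, which is supplied by Assumption \ref{assumption1}.

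The crux is to show that the infimum in (\ref{HJB_buyer}) is genuinely attained at $\phi^*=0$. Since $\mu=r$ under $\mathbb{Q}$, the drift term $(\mu-r)\phi S\,\frac{\partial F^b}{\partial v}$ drops out and the $\phi$-dependence of $\mathcal{L}_2^{\phi}F^b$ collapses to the convex quadratic
\[
g(\phi)=\tfrac{1}{2}S^2\sigma^2\phi^2\frac{\partial^2 F^b}{\partial v^2}-\phi S^2\sigma^2\frac{\partial^2 F^b}{\partial S\partial v},
\]
so the infimum over $\phi\geq 0$ sits at $0$ provided $\frac{\partial^2 F^b}{\partial v^2}\geq 0$ (so $g$ opens upward) and $g'(0)=-S^2\sigma^2\frac{\partial^2 F^b}{\partial S\partial v}\geq 0$, i.e. $\frac{\partial^2 F^b}{\partial S\partial v}\leq 0$. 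I expect this sign analysis to be the main obstacle. Both signs follow from convexity of $R$ and monotonicity of the payoff: differentiating once gives $\frac{\partial F^b}{\partial v}=e^{r(T-t)}\mathbf{E}_{\mathbb{Q}}[D^+R(\cdot)]$, and since $R$ is convex $D^+R$ is non-decreasing, so $\frac{\partial^2 F^b}{\partial v^2}\geq 0$ immediately; for the mixed partial, the argument $ve^{r(T-t)}-(S_T-K)^+$ is non-increasing in $S$ (because $S_T$ is increasing in $S$ and $(\cdot-K)^+$ is non-decreasing), whence $D^+R(\cdot)$ is non-increasing in $S$ and $\frac{\partial^2 F^b}{\partial S\partial v}\leq 0$ follows after differentiating under the integral.

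Granting these two inequalities, $g(\phi)\geq g(0)=0$ for all $\phi\geq 0$, so the frozen-coefficient PDE of the previous step coincides with the full HJB equation, the candidate is the value function by the usual verification argument, and $\rho^b(S,v;Z)=F^b(0,S,v)$ is exactly (\ref{buyer_risk_value1}). As an alternative route I could instead invoke the duality (\ref{equivalence}), $\rho^b(S,v;Z)=\rho^s(S,-v;-Z)$, and solve the seller's problem for the non-increasing payoff $-(S-K)^+$; the same no-trade conclusion emerges, since a non-increasing target would call for a forbidden short position.
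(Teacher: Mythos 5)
Your proof is correct, but it takes a genuinely different route from the paper's. The paper never touches the HJB equation for the buyer: it argues probabilistically that $\phi^*\equiv 0$ beats every admissible strategy, writing $v_T^{v,-\phi(\cdot)}=ve^{rT}-\sigma\int_0^T e^{r(T-u)}\phi_u S_u\,dW_u$, applying the convexity inequality $R(a)-R(b)\geq D^+R(b)(a-b)$ with $b=ve^{rT}-Z$, and then handling the cross term $\mathbf{E}_{\mathbb{Q}}\left[-D^+R(ve^{rT}-Z)\,\sigma\int_0^T e^{r(T-u)}\phi_u S_u\,dW_u\right]$ by invoking Lemma 3.2 of \cite{guo2017equal} to represent $-D^+R(ve^{rT}-Z)$ as its mean plus a stochastic integral with \emph{non-negative} integrand $\psi$, so that It\^o isometry makes the cross term non-negative; the non-negativity of $\psi$ encodes exactly the monotonicity of $(S_T-K)^+$ in $S_T$ that you encode differentially as $\frac{\partial^2 F^b}{\partial S\partial v}\leq 0$. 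Your verification argument --- candidate value function from Feynman--Kac, then minimizing the convex quadratic $g(\phi)$ over $\phi\geq 0$ using $\frac{\partial^2 F^b}{\partial v^2}\geq 0$ and $\frac{\partial^2 F^b}{\partial S\partial v}\leq 0$ --- is in effect the route the paper itself uses for the \emph{seller} in Proposition \ref{proposition1}, transplanted to the buyer. What your route buys: it is self-contained (no imported representation lemma) and symmetric with the seller's proof. What the paper's route buys: it needs no smoothness of the value function, whereas your differentiation under the expectation and the ``usual verification argument'' implicitly require $F^b$ to be $C^{1,2}$ and the relevant local martingales to be true martingales, which is delicate for the non-smooth risk function $R_1(x)=x^+$; the probabilistic comparison uses only convexity and one-sided derivatives. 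Both proofs ultimately rest on the same fact --- the unconstrained Black--Scholes hedge for the buyer of a call is a short position, so the constraint projects it to zero --- and your closing alternative via the duality (\ref{equivalence}) is precisely the device the paper uses to derive the put-option corollaries in Appendix \ref{appendix_3}.
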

\begin{proof}
We first claim that the optimal hedging strategy $\phi^*$ for the buyer should be zero when $Z(S)=(S-K)^+$, i.e.
\begin{equation}\label{opti1}
  \rho^b(S,v;Z)=\mathbf{E}_{\mathbb{Q}}R(v_T^{v,0}-Z).
\end{equation}
It suffices to demonstrate that $\mathbf{E}_{\mathbb{Q}}R(v_T^{v,-\phi(\cdot)}-Z)\geq\mathbf{E}_{\mathbb{Q}}R(ve^{rT}-Z)$ for any $\phi(\cdot)\in\Phi$. According to the dynamics (\ref{hedging_account1}), we have $v_t^{v,-\phi(\cdot)}=ve^{rt}-\sigma\int_0^te^{r(t-u)}\phi_uS_udW_u$. Since $R(x)$ is a convex function, we have
\begin{equation}\label{c1}
  \mathbf{E}_{\mathbb{Q}}\left[R(v_T^{v,-\phi(\cdot)}-Z)-R(ve^{rT}-Z)\right]\geq \mathbf{E}_{\mathbb{Q}}\left[-D^+R(ve^{rT}-Z)\sigma\int_0^Te^{r(T-u)}\phi_uS_udW_u\right].
\end{equation}
Following Lemma 3.2 in \cite{guo2017equal}, the random variable $-D^+R\left(ve^{rT}-Z\right)$ can be expressed as
\begin{equation}\label{c2}
-D^+R\left(ve^{rT}-Z\right)= -\mathbf{E}_{\mathbb{Q}}\left[D^+R\left(ve^{rT}-Z\right)\right]+\int_0^T\psi_u\sigma S_udW_u,
\end{equation}
where $\psi(\cdot)$ is non-negative. By It\^o isometry, we have
\begin{equation}\label{c3}
\mathbf{E}_{\mathbb{Q}}\left[-D^+R(ve^{rT}-Z)\sigma\int_0^Te^{r(T-u)}\phi_uS_udW_u\right]= \mathbf{E}_{\mathbb{Q}}\int_0^T\sigma^2 e^{r(T-u)}\phi_u\psi_uS_u^2du\geq 0,
\end{equation}
which completes the proof for  our claim (\ref{opti1}).  On the other hand, it yields that from  (\ref{opti1})
\begin{equation}\label{reveiwe_2}
  \rho^b(S,v;Z)=\mathbf{E}_{\mathbb{Q}}[R(v_T^{v,0}-(S_T-K)^+)]=\mathbf{E}_{\mathbb{Q}}[R(ve^{rT}-(Se^{(r-\frac{\sigma^2}{2})T+\sigma W_T}-K)^+)],
\end{equation}
where $W_T\sim N(0,T)$ under the measure $\mathbb{Q}$.
After some simple calculations, it is easy to check the right hand of Equation (\ref{reveiwe_2}) is exactly the same with the right hand of Equation (\ref{opti1}), which completes the proof.
\end{proof}
\begin{remark}\label{remark4}
The optimal hedging strategy for the buyer of the European call option is to hold no stocks when the short selling is banned, which is completely different from the classical Black-Scholes model. The reason is that the optimal hedging strategy in the  Black-Scholes model $\phi^*=-\frac{\partial C^{BS}}{\partial S}$ is non-positive, which is infeasible due to the short selling ban.
\end{remark}
After deriving the minimum  risk exposure for the seller and the buyer, the analytical equal-risk price for the European call option is provided in the following theorem.
\begin{theorem}\label{thm2}
When the short selling is banned in the Black-Scholes model, the equal-risk price of the European call option is given as follows.
\begin{enumerate}
  \item When the risk function is $R_1(x)=x^+$,  the equal-risk price $v$ is given by
\begin{equation}\label{call1}
  v=C^{BS}(S,K,r,\sigma,T)-\left[P^{BS}(S,K+ve^{rT},r,\sigma,T)-P^{BS}(S,K,r,\sigma,T)\right],
\end{equation}
where $P^{BS}(S,K,r,\sigma,T)$ is the  Black-Scholes formula for a European put option .
  \item When the risk function is $R_2(x)=e^x-1$, the equal-risk price $v$ is explicitly expressed as
  \begin{equation}\label{equal_risk_price_call}
  v=\frac{1}{2}\left\{C^{BS}(S,K,r,\sigma,T)-e^{-rT}\ln{\left(\frac{1}{\sqrt{2\pi}}\int_{-\i}^{\i}e^{-(Se^{(r-\frac{\sigma^2}{2})T+\sigma\sqrt{T}x}-K)^+-\frac{x^2}{2}}dx\right)}\right\}.
\end{equation}
\end{enumerate}
\end{theorem}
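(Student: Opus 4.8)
The plan is to substitute each of the two risk functions into the closed forms for $\rho^s$ and $\rho^b$ obtained in Propositions \ref{proposition1} and \ref{proposition2}---both of which were established for a general convex, non-decreasing risk function $R$---and then solve the defining equation $\rho^s(S,\bar v;Z)=\rho^b(S,\bar v;Z)$ for $v$. Throughout I abbreviate $C^{BS}=C^{BS}(S,K,r,\sigma,T)$ and use the substitution $W_T=\sqrt{T}\,x$, so that the buyer's integral in (\ref{buyer_risk_value1}) is exactly $\mathbf{E}_{\mathbb{Q}}[R(ve^{rT}-(S_T-K)^+)]$ with $S_T=Se^{(r-\sigma^2/2)T+\sigma W_T}$. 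Existence and uniqueness of the root are already supplied by Theorem \ref{thm1}, so in each case it suffices to exhibit the explicit solution.

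Consider first the smooth case $R_2(x)=e^x-1$. Here the seller's risk from (\ref{seller_risk_value}) is $e^{e^{rT}(C^{BS}-v)}-1$, while the buyer's risk factorizes as $e^{ve^{rT}}I-1$, where $I=\frac{1}{\sqrt{2\pi}}\int_{-\i}^{\i}e^{-(Se^{(r-\sigma^2/2)T+\sigma\sqrt{T}x}-K)^+}e^{-x^2/2}\,dx$ is independent of $v$. Setting the two equal cancels the constant $-1$; taking logarithms then yields a relation that is \emph{linear} in $v$, namely $e^{rT}(C^{BS}-v)=ve^{rT}+\ln I$. Solving gives $v=\tfrac12\{C^{BS}-e^{-rT}\ln I\}$, which is precisely (\ref{equal_risk_price_call}) once $\ln I$ is written as $\ln\big(\frac{1}{\sqrt{2\pi}}\int_{-\i}^{\i}e^{-(Se^{(r-\sigma^2/2)T+\sigma\sqrt{T}x}-K)^+-x^2/2}\,dx\big)$.

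The case $R_1(x)=x^+$ is the substantive one. The seller's risk is $e^{rT}(C^{BS}-v)^+$ and the buyer's risk is $\mathbf{E}_{\mathbb{Q}}[(ve^{rT}-(S_T-K)^+)^+]$. The key device is the pointwise identity
\[
\big(ve^{rT}-(S_T-K)^+\big)^+=(K+ve^{rT}-S_T)^+-(K-S_T)^+,
\]
which one checks directly on the three regions $S_T\le K$, $K<S_T\le K+ve^{rT}$, and $S_T>K+ve^{rT}$. Taking $\mathbf{E}_{\mathbb{Q}}$ and recognizing each term as an undiscounted Black--Scholes put value, $\mathbf{E}_{\mathbb{Q}}[(K'-S_T)^+]=e^{rT}P^{BS}(S,K',r,\sigma,T)$, turns the buyer's risk into $e^{rT}[P^{BS}(S,K+ve^{rT},r,\sigma,T)-P^{BS}(S,K,r,\sigma,T)]$. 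It then remains to drop the positive part on the seller's side: since the put price is increasing in strike the buyer's risk is non-negative, and if $v>C^{BS}$ the seller's risk would vanish, forcing $ve^{rT}=0$ and hence $v=0<C^{BS}$, a contradiction; therefore $v\le C^{BS}$ and $(C^{BS}-v)^+=C^{BS}-v$. Cancelling $e^{rT}$ and rearranging yields the implicit equation (\ref{call1}).

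I expect the main obstacle to be the $R_1$ computation of $\rho^b$: unlike the $R_2$ case the integrand does not factorize through $v$, so the decomposition of the truncated payoff $(ve^{rT}-(S_T-K)^+)^+$ into a spread of two put payoffs is the crucial step that reduces an otherwise awkward integral to known Black--Scholes quantities. A secondary but necessary subtlety is justifying that the equal-risk root lies in the region $v\le C^{BS}$, so that the seller's positive part simplifies; the monotonicity of put prices in the strike together with the positivity of $C^{BS}$ settles this cleanly, and the uniqueness in Theorem \ref{thm1} guarantees that the implicit relation (\ref{call1}) pins down $\bar v$ unambiguously.
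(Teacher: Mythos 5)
Your proposal is correct and takes essentially the same route as the paper: the paper's proof of Theorem \ref{thm2} merely combines Propositions \ref{proposition1} and \ref{proposition2} with Definition \ref{def_equal} and declares the root-finding ``straightforward to verify,'' and your write-up simply supplies those omitted details, with the put-spread identity $(ve^{rT}-(S_T-K)^+)^+=(K+ve^{rT}-S_T)^+-(K-S_T)^+$ and the reduction to $e^{rT}[P^{BS}(S,K+ve^{rT},r,\sigma,T)-P^{BS}(S,K,r,\sigma,T)]$ being exactly the suppressed computation behind (\ref{call1}). One small point worth making explicit: that identity requires $ve^{rT}\ge 0$, which does hold at the equal-risk root (for $v<0$ the buyer's risk vanishes while the seller's is strictly positive, so no root exists there), complementing your argument that $v\le C^{BS}$.
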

\begin{proof}
The minimum risk exposure for the  seller and the buyer have been derived in Propositions \ref{proposition1} and \ref{proposition2}. According to Definition \ref{def_equal}, the equal-risk price of the European call option  is the root of the following  equation,
\begin{equation}\label{equal_risk_equation}
  \rho^s\left(S,v;(S-K)^+\right)=\rho^b\left(S,v;(S-K)^+\right).
\end{equation}
It is then straightforward to verify that the equal-risk prices are given by (\ref{call1}) and (\ref{equal_risk_price_call}).
\end{proof}
\begin{remark}
Note that the analytical pricing formula (\ref{call1}) is same with those provided by \cite{guo2017equal} when the risk function is $R_1(x)=x^+$, which demonstrates that our PDE approach is consistent with the results from \cite{guo2017equal}. In addition, we derive another explicit pricing formula in (\ref{equal_risk_price_call}) for the case when the risk function is given by $R_2(x)=e^x-1$.  The main difference between these  formulae (\ref{call1}) and (\ref{equal_risk_price_call}) is that  the former  is not explicit and it must be solved by root finding algorithms, while  the latter  is explicit.
\end{remark}
The pricing formula (\ref{call1}) was interpreted as the standard Black-Scholes price with an adjustment term in \cite{guo2017equal}. In this paper, we mainly focus on the new explicit equal-risk price (\ref{equal_risk_price_call}) when the risk function is  $R_2(x)=e^x-1$.
To illustrate how the short selling ban affects the European call option price, we compare the results computed from the pricing formula (\ref{equal_risk_price_call}) with those calculated from the standard Black-Scholes formula in Figure \ref{figure1a} under the following parameters
\begin{equation}\label{par}
  K=10, r=0.05,T=0.5,\sigma=0.3.
\end{equation}
As shown in Figure \ref{figure1a}, the absolute differences between the equal-risk prices and the Black-Scholes prices are significant for large underlying prices, which indicates that the short selling ban affects the value of European call option substantially. When the underlying stock price is low, the price of a call option is also low regardless of the short selling ban. As a result, the absolute differences between the equal-risk prices and the Black-Scholes prices are not very significant. In order to demonstrate the effect for small underlying prices,  we define the relative difference with respect to Black-Scholes prices as
\begin{equation}\label{perctentage}
 \frac{\text{Equal-risk price}-\text{Black-Scholes price}}{\text{Black-Scholes price}}\times 100\%,
\end{equation}
which is depicted in Figure \ref{figure1b}. It is also observed that the relative difference is substantial even for small underlying prices. From Figures \ref{figure1a} and \ref{figure1b}, we conclude that the short selling ban  significantly lowers the value of the European call option.

\begin{figure}[!htbp]
\centering \subfigure[Two prices for the European call option]{
\label{figure1a}
\includegraphics[width=0.475\textwidth]{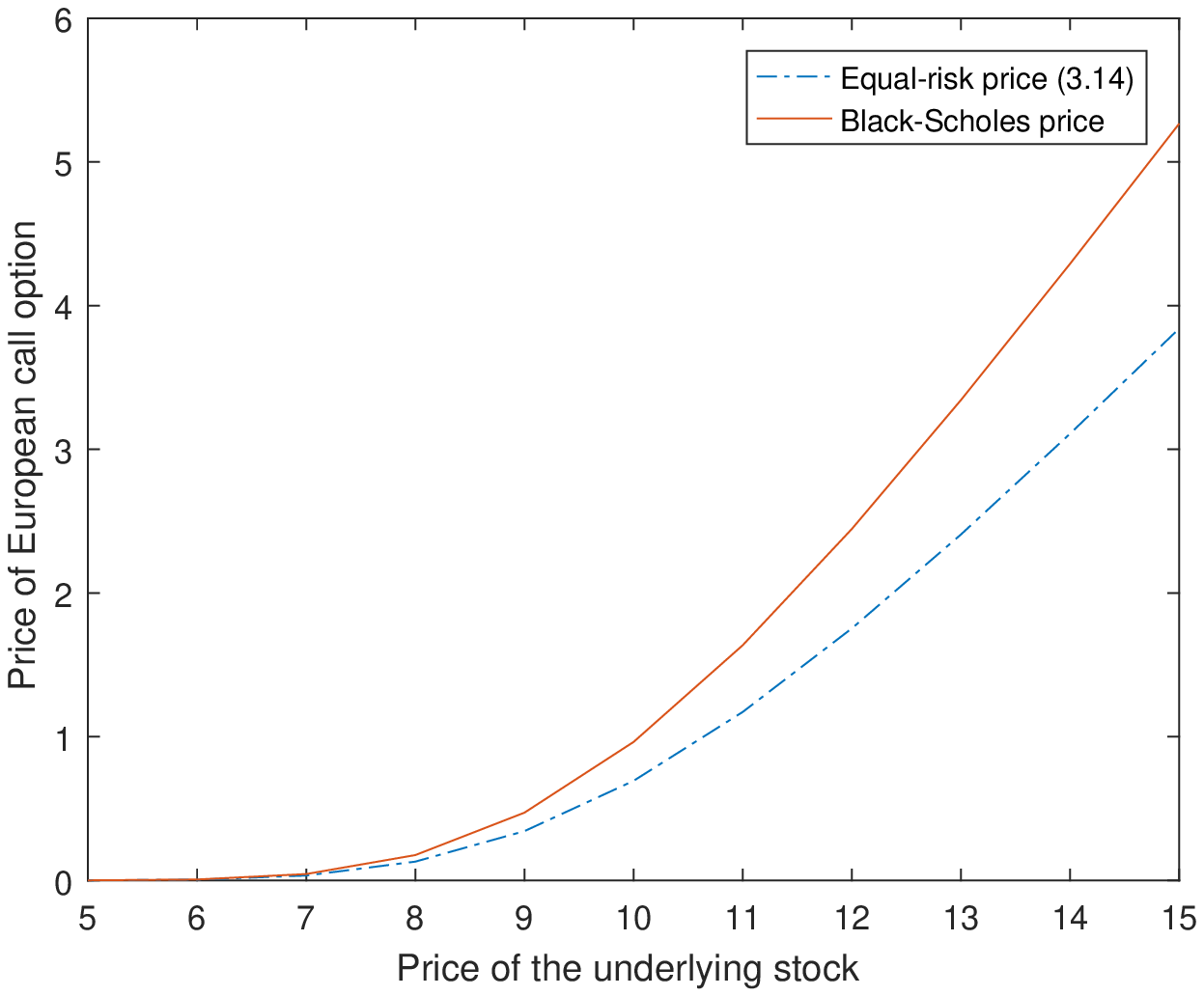}
} \subfigure[Relative difference] {
\label{figure1b}
\includegraphics[width=0.475\textwidth]{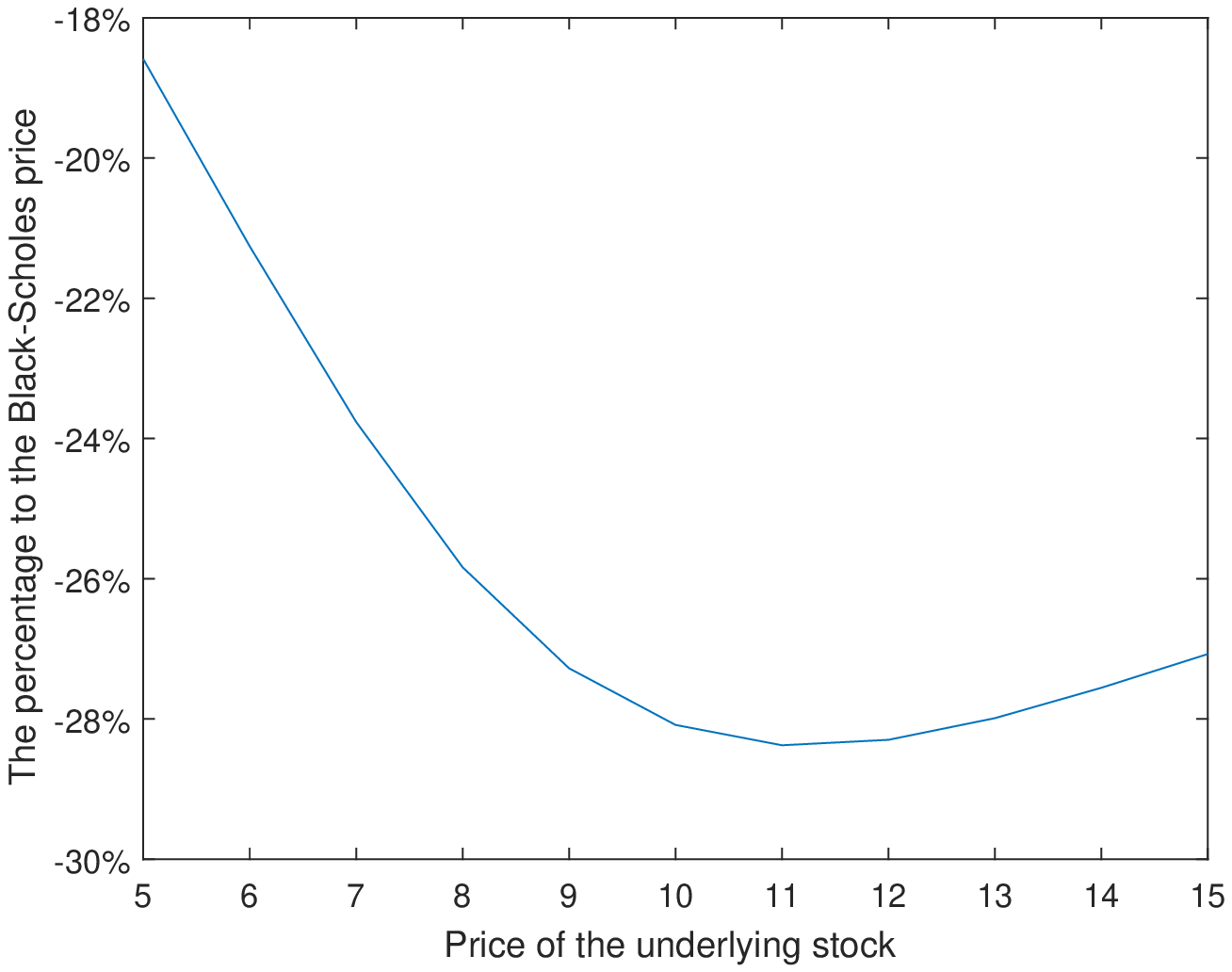}
}
\caption{Comparisons between the equal-risk prices and the Black-Scholes prices.}
\end{figure}

From Propositions \ref{proposition1} and \ref{proposition2}, the optimal hedge in the  Black-Scholes model is still available for the seller of the European call option, while the counterpart is unavailable for the buyer due to the short selling ban. If the transaction price of contingent claim is still set to be Black-Scholes price, the seller  faces no risk, but the buyer incurs substantial risks because the optimal hedging strategy is no longer available.  The equal-risk price redistributes the risk between the buyer and seller equally, transferring some risk from the buyer to the seller. As a result, the equal-risk price should be lower than  the Black-Scholes  price. This is also seen from (\ref{equal_risk_price_call}). Indeed,  ((\ref{equal_risk_price_call})) and Jensen's inequality provide
\begin{eqnarray*}
  2(C^{BS}-v) &=& C^{BS}+e^{-rT}\log \mathbf{E}_{\mathbb{Q}}\left[e^{-(S_T-K)^+}\right] \\
   &=& e^{-rT}\left(\log \mathbf{E}_{\mathbb{Q}}\left[e^{-(S_T-K)^+}\right]-   \mathbf{E}_{\mathbb{Q}}\left[e^{-(S_T-K)^+}\right]\right)\\
   &\geq& e^{-rT}\left( \mathbf{E}_{\mathbb{Q}}\left[e^{-(S_T-K)^+}\right]-   \mathbf{E}_{\mathbb{Q}}\left[e^{-(S_T-K)^+}\right]\right)=0.
\end{eqnarray*}

According to the relation (\ref{equivalence}) between the minimum risk exposure for the buyer and the seller, we also derive the equal-risk price for the European put option as corollaries. The proofs are left in Appendix \ref{appendix_3}.
\begin{corollary}\label{coro1}
When the contingent claim is a European put option with payoff   $Z(S)=(K-S)^+$,  the minimum  risk exposure  for the buyer is
\begin{equation}\label{seller_risk_value3}
 \rho^b(S,v;Z) = R\left(e^{rT}[v-P^{BS}(S,K,r,\sigma,T)]\right).
\end{equation}
\end{corollary}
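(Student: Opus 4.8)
The plan is to deduce the buyer's put risk from the already-established seller's call result via the symmetry relation (\ref{equivalence}). Setting $Z(S)=(K-S)^+$, relation (\ref{equivalence}) gives $\rho^b(S,v;(K-S)^+)=\rho^s(S,-v;-(K-S)^+)$, so it suffices to compute the seller's minimum risk exposure for the claim $\tilde Z(S):=-(K-S)^+$ received at the price $-v$. The key observation is that this seller's problem is structurally identical to the one solved in Proposition \ref{proposition1}, with the Black--Scholes call value $C^{BS}$ replaced by the Black--Scholes value of $\tilde Z$. Since $\tilde Z$ is minus a put payoff, by linearity (equivalently, put--call parity) its no-arbitrage replication cost at time $t$ is $-P^{BS}(S,K,r,\sigma,T-t)$.

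Next I would verify, exactly as in Proposition \ref{proposition1}, that the trial solution
\[
F^s(t,S,v)=R\bigl(e^{r(T-t)}[-P^{BS}(S,K,r,\sigma,T-t)-v]\bigr)
\]
solves the seller's HJB system (\ref{HJB_seller}) with terminal data $R(\tilde Z(S)-v)=R(-(K-S)^+-v)$. Differentiating this ansatz produces the same algebraic structure as in Proposition \ref{proposition1}, so that the candidate minimiser is $\phi^*=\max\{-\partial P^{BS}/\partial S,\,0\}$, and substituting $\phi^*$ collapses the bracketed expression onto the Black--Scholes PDE satisfied by $P^{BS}$ (the operator is linear, so it also annihilates $-P^{BS}$), making the left-hand side vanish; the terminal condition holds because $P^{BS}(S,K,r,\sigma,0)=(K-S)^+$.

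The step I expect to be the crux is checking that the non-negativity constraint $\phi\ge 0$ is not binding, which is what makes the parity argument legitimate rather than merely formal. Here the required hedge is $-\partial P^{BS}/\partial S$, and since the Delta of a European put is non-positive, this quantity is non-negative; hence the short selling ban does not constrain the seller of $\tilde Z$ (mirroring Remark \ref{remark3} for the call seller), and $\phi^*=-\partial P^{BS}/\partial S$ is admissible. Were the sign reversed one would instead be forced into the $\phi^*=0$ regime as in Proposition \ref{proposition2}. Granting feasibility, evaluating the trial solution at $t=0$ and replacing $v$ by $-v$ yields $\rho^s(S,-v;-(K-S)^+)=R\bigl(e^{rT}[v-P^{BS}(S,K,r,\sigma,T)]\bigr)$, which by (\ref{equivalence}) equals $\rho^b(S,v;(K-S)^+)$, exactly the claim (\ref{seller_risk_value3}).
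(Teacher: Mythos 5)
Your proposal is correct and follows essentially the same route as the paper's own proof in Appendix~\ref{appendix_3}: reduce the buyer's put risk to $\rho^s(S,-v;-(K-S)^+)$ via relation (\ref{equivalence}), solve the seller's HJB system (\ref{HJB_seller}) with the trial solution $F^s(t,S,v)=R\bigl(e^{r(T-t)}[-P^{BS}(S,K,r,\sigma,T-t)-v]\bigr)$ exactly as in Proposition~\ref{proposition1}, and evaluate at $t=0$ with $v$ replaced by $-v$. If anything, you are more explicit than the paper, which compresses the verification into ``the same technique,'' whereas you spell out the genuinely load-bearing step that $\phi^*=-\partial P^{BS}/\partial S\ge 0$ makes the short-selling constraint non-binding.
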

\begin{corollary}\label{coro2}
When the contingent claim is a European put option with payoff   $Z(S)=(K-S)^+$,  the minimum  risk exposure for the seller  is
\begin{equation}\label{buyer_risk_value4}
  \rho^s(S,v;Z)=\frac{1}{\sqrt{2\pi}}\int_{-\i}^{\i}R\left((K-Se^{(r-\frac{\sigma^2}{2})T+\sigma\sqrt{T}x})^+-ve^{rT}\right)e^{-\frac{x^2}{2}}dx.
\end{equation}
\end{corollary}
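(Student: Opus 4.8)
The plan is to exploit the buyer--seller equivalence (\ref{equivalence}), which converts the put seller's problem into a buyer-type problem solvable by the same device as Proposition \ref{proposition2}. First I would apply $\rho^s(S,v;Z)=\rho^b(S,-v;-Z)$ with $Z=(K-S)^+$, reducing the computation to an optimal control whose terminal cost, after writing $v_T^{v,\phi}=ve^{rT}+\sigma\int_0^T e^{r(T-u)}\phi_u S_u\,dW_u$ under $\mathbb{Q}$ (with $\mu=r$), is $R\bigl((K-S_T)^+-ve^{rT}-\sigma\int_0^T e^{r(T-u)}\phi_u S_u\,dW_u\bigr)$. Equivalently one may start directly from (\ref{seller_risk}); the two routes lead to the same expression, so I would use whichever reads more cleanly in the appendix.

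The central claim is that the optimal hedge is $\phi^\ast\equiv 0$. The Black--Scholes delta of a put, $\partial P^{BS}/\partial S=N(d_1)-1$, is negative, so the unconstrained optimum would demand a short position that the ban forbids; this is the put-seller mirror image of the call-buyer situation in Remark \ref{remark4}. To establish $\phi^\ast=0$ rigorously I would show $\mathbf{E}_{\mathbb{Q}}[R(Z-v_T^{v,\phi})]\ge \mathbf{E}_{\mathbb{Q}}[R(Z-ve^{rT})]$ for every $\phi\in\mathbf{\Phi}$. Convexity of $R$ supplies the first-order bound whose leading term is $\mathbf{E}_{\mathbb{Q}}\bigl[-D^+R((K-S_T)^+-ve^{rT})\,\sigma\int_0^T e^{r(T-u)}\phi_u S_u\,dW_u\bigr]$; then, exactly as in (\ref{c2}), I represent $-D^+R((K-S_T)^+-ve^{rT})$ through the martingale representation theorem as a constant plus $\int_0^T\psi_u\sigma S_u\,dW_u$, and apply It\^o's isometry as in (\ref{c3}) to obtain $\mathbf{E}_{\mathbb{Q}}\int_0^T\sigma^2 e^{r(T-u)}\phi_u\psi_u S_u^2\,du$.

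The one sign I must verify carefully --- and the main obstacle --- is that the integrand $\psi_u$ is non-negative, since this, together with $\phi_u\ge 0$, is precisely what forces the cross term to have the favourable sign and hence makes $\phi^\ast=0$ optimal. The direction works out: $S_T\mapsto -D^+R((K-S_T)^+-ve^{rT})$ is non-decreasing (as $S_T$ grows, $(K-S_T)^+$ and therefore the argument of the non-decreasing map $D^+R$ decrease), so the version of Lemma 3.2 of \cite{guo2017equal} invoked in Proposition \ref{proposition2} applies verbatim and delivers $\psi\ge 0$. I would take care with the seller's sign convention in (\ref{hedging_account1}), which differs from the buyer's, to confirm the orientation is the same as in Proposition \ref{proposition2}.

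With $\phi^\ast=0$ the wealth is the deterministic $ve^{rT}$, so $\rho^s(S,v;Z)=\mathbf{E}_{\mathbb{Q}}[R((K-S_T)^+-ve^{rT})]$; substituting the lognormal terminal price $S_T=Se^{(r-\sigma^2/2)T+\sigma\sqrt{T}x}$ with $x\sim N(0,1)$ yields the Gaussian integral (\ref{buyer_risk_value4}). I would present Corollary \ref{coro1} in parallel: there the put \emph{buyer}'s Black--Scholes delta is a feasible (non-negative) position, so the full hedge goes through as in Proposition \ref{proposition1} and produces the closed-form risk exposure.
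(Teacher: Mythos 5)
Your proposal is correct and in substance identical to the paper's proof: the paper likewise invokes the equivalence (\ref{equivalence}) together with the machinery of Proposition \ref{proposition2} (convexity of $R$, the representation of $-D^+R$ with non-negative integrand $\psi$ via Lemma 3.2 of \cite{guo2017equal}, and It\^o isometry) to conclude $\phi^\ast=0$ and reduce the exposure to the Gaussian integral (\ref{buyer_risk_value4}). The only cosmetic difference is that the paper performs the Proposition \ref{proposition2} argument on the buyer's HJB problem for the claim $-(K-S)^+$ and then flips back through (\ref{equivalence}), whereas you run the same argument directly on the seller's side, including the same (correctly verified) monotonicity check that $S_T\mapsto -D^+R\left((K-S_T)^+-ve^{rT}\right)$ is non-decreasing.
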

\begin{corollary}\label{coro3}
If short selling is banned in the Black-Scholes model, the equal-risk price of the European put option is given as follows.
\begin{enumerate}
  \item Under the risk function $R_1(x)=x^+$, the equal-risk price of the European put option satisfies
 \begin{equation}\label{put_price}
   v=P^{BS}(S,K,r,T,\sigma)+P^{BS}(S,K-ve^{rT},r,T,\sigma).
 \end{equation}
  \item Under the risk function $R_2(x)=e^x-1$, the equal-risk price of the European put option is explicitly expressed as
  \begin{equation}\label{formula_thm2}
  v=\frac{1}{2}\left\{P^{BS}(S,K,r,T,\sigma)+e^{-rT}\ln{\left(\frac{1}{\sqrt{2\pi}}\int_{-\i}^{\i}e^{(K-Se^{(r-\frac{\sigma^2}{2})T+\sigma\sqrt{T}x})^+-\frac{x^2}{2}}dx\right)}\right\}.
\end{equation}
\end{enumerate}
\end{corollary}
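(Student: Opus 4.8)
The plan is to obtain the put's equal-risk price exactly as Theorem~\ref{thm2} does for the call: by Definition~\ref{def_equal} the price is the root in $v$ of $\rho^s(S,v;(K-S)^+)=\rho^b(S,v;(K-S)^+)$, so I first need closed forms for the two put exposures (Corollaries~\ref{coro1} and~\ref{coro2}), and then I solve the resulting scalar equation separately for $R_1$ and $R_2$.

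For the two exposures I would not redo the stochastic control from scratch but exploit the buyer--seller duality~(\ref{equivalence}), $\rho^b(S,v;Z)=\rho^s(S,-v;-Z)$, together with its mirror $\rho^s(S,v;Z)=\rho^b(S,-v;-Z)$, evaluated at $Z=(K-S)^+$. The key observation is that the reflected claim $-(K-S)^+$ has Black--Scholes value $-P^{BS}(S,K,r,\sigma,T-t)$, whose Delta $-\partial P^{BS}/\partial S$ is non-negative because the put Delta is non-positive. Thus the reflected claim falls into exactly the regime of Proposition~\ref{proposition1}: its seller can replicate perfectly with the admissible long-Delta hedge, the Black--Scholes trial solution solves the associated HJB equation verbatim, and after substituting $-v$ for the wealth argument and transferring back through~(\ref{equivalence}) this yields the buyer's put exposure~(\ref{seller_risk_value3}). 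Dually, a \emph{buyer} of $-(K-S)^+$ would need the infeasible short position $\partial P^{BS}/\partial S<0$, so exactly as in Proposition~\ref{proposition2} the optimal control is $\phi^*\equiv 0$, and transferring back gives the seller's put exposure~(\ref{buyer_risk_value4}).

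With both exposures in hand I would impose the equal-risk condition~(\ref{equal_risk}) and solve for $v$. Writing $S_T=Se^{(r-\frac{\sigma^2}{2})T+\sigma\sqrt{T}x}$, the case $R_2(x)=e^x-1$ is the easy one: substituting~(\ref{seller_risk_value3}) and~(\ref{buyer_risk_value4}) reduces $\rho^s=\rho^b$ to
\begin{equation*}
e^{\,e^{rT}[v-P^{BS}]}=e^{-v e^{rT}}\,\frac{1}{\sqrt{2\pi}}\int_{-\infty}^{\infty}e^{(K-Se^{(r-\frac{\sigma^2}{2})T+\sigma\sqrt{T}x})^+}e^{-\frac{x^2}{2}}dx,
\end{equation*}
and taking logarithms linearizes this in $v$, so collecting the $v e^{rT}$ terms and dividing by $2e^{rT}$ produces~(\ref{formula_thm2}) at once. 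For $R_1(x)=x^+$ the work is in the seller's integral: the identity $\big((K-S_T)^+-v e^{rT}\big)^+=\big((K-v e^{rT})-S_T\big)^+$ collapses~(\ref{buyer_risk_value4}) into $e^{rT}P^{BS}(S,K-v e^{rT},r,\sigma,T)$, a single put with shifted strike, while the buyer's side~(\ref{seller_risk_value3}) equals $e^{rT}[v-P^{BS}(S,K,r,\sigma,T)]$; equating and cancelling $e^{rT}$ gives the implicit relation~(\ref{put_price}).

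The main obstacle is the $R_1$ case. One must spot the ``max of a max'' simplification that turns the seller's integral into a shifted-strike Black--Scholes put, and---because the right derivative $D^+R_1$ is a step function---verify the active branch, i.e.\ that at the equal-risk price $v>P^{BS}(S,K,r,\sigma,T)$ (consistent with the put seller, who cannot short, bearing the larger risk at the Black--Scholes price), so that the buyer's $(\cdot)^+$ does not vanish. Uniqueness of the root is not a concern, since it is already guaranteed by Theorem~\ref{thm1}; hence any $v$ satisfying~(\ref{put_price}) or~(\ref{formula_thm2}) is necessarily \emph{the} equal-risk price.
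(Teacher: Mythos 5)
Your proposal is correct and follows essentially the same route as the paper: the paper likewise obtains the put exposures (Corollaries \ref{coro1} and \ref{coro2}) by applying the equivalence relation (\ref{equivalence}) to the reflected claim $-(K-S)^+$ and reusing the HJB arguments of Propositions \ref{proposition1} and \ref{proposition2}, and then solves $\rho^s=\rho^b$ for $v$ exactly as in Theorem \ref{thm2}. In fact your write-up supplies details the paper leaves implicit (the shifted-strike identity $\left((K-S_T)^+-ve^{rT}\right)^+=\left((K-ve^{rT})-S_T\right)^+$ and the verification that $v>P^{BS}$ so the active branch of $R_1$ is correct), since the paper's own proof is only the remark that it is ``similar'' to the call case.
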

When the risk function is $R_1(x)=x^+$, the equal-risk price (\ref{put_price}) for the European put option coincides with the results of \cite{guo2017equal}.  Using the parameters from (\ref{par}), a comparison between the equal-risk price (\ref{formula_thm2}) and the Black-Scholes price is plotted in Figure \ref{figure2a} to demonstrate the effect of the short selling ban on European put options.
\begin{figure}[!htbp]
\centering \subfigure[Two prices for European put option.]{
\label{figure2a}
\includegraphics[width=0.475\textwidth]{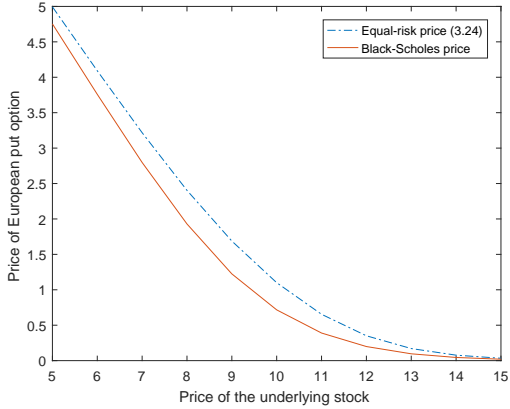}
} \subfigure[Relative difference] {
\label{figure2b}
\includegraphics[width=0.475\textwidth]{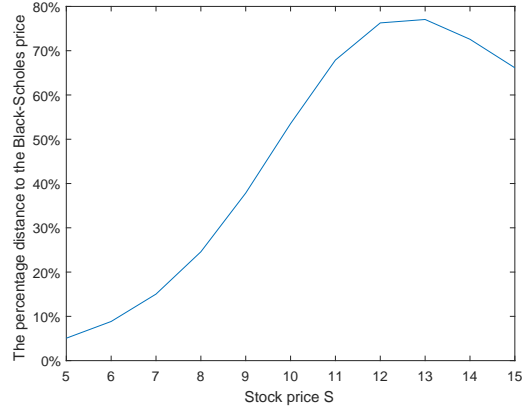}
}
\caption{Comparisons between equal-risk price and Black-Scholes price.}
\end{figure}
From Figure \ref{figure2a}, the absolute difference between the two prices is significant when the underlying price is small. The relative distance of the equal-risk price with respect to the Black-Scholes price is depicted in Figure \ref{figure2b}, which indicates that the relative difference is significant even though the absolute difference is small for large underlying prices. From both Figures \ref{figure2a} and \ref{figure2b}, we conclude that the equal-risk price of a European put option is higher than the Black-Scholes price. Similarly, it can also been seen  from (\ref{formula_thm2}). Indeed,  (\ref{formula_thm2}) and Jensen's inequality provide
\begin{eqnarray*}
  2(v-P^{BS}) &=&
  e^{-rT}\left(\log \mathbf{E}_{\mathbb{Q}}\left[e^{(K-S_T)^+}\right]-   \mathbf{E}_{\mathbb{Q}}\left[e^{(K-S_T)^+}\right]\right)\geq 0.
\end{eqnarray*}
 In other words, the short selling ban has increased the European put option price substantially. Compared with the  Black-Scholes model, the buyer  pays more to purchase a European put option when short selling is banned.   This difference represents the additional risk the buyer takes under the equal-risk framework.

In summary, we have explicitly computed the equal-risk prices of  European call and put options by solving the HJB equations (\ref{HJB_seller}) and (\ref{HJB_buyer}). The pricing formula  is consistent with the results derived by \cite{guo2017equal}. However, recall that the analytical results of \cite{guo2017equal} are only restricted to  monotonic payoffs.
We will extend them to the case of non-monotonic payoffs, such as a butterfly spread option, by providing a generalization of Propositions \ref{proposition1}  and \ref{proposition2}.

Let $Z$ be an $\mathcal{F}$-measurable random variable satisfying $\mathbf{E}_{\mathbb{Q}}[Z^2]<\i$.  Then there exists an $\mathcal{F}$-progressively measurable process $\pi$ such that $Z=v^{z,\pi}_T$ and $\mathbf{E}_{\mathbb{Q}}[\int_0^T\pi(t,\omega)^2dt]<\i$. In other words, $\pi$ is a perfect hedging strategy for $Z$ if there are no trading restrictions, and $z=\mathbf{E}_{\mathbb{Q}}[e^{-rT}Z]$ is the price of the claim under the Black-Scholes model.
\begin{proposition}\label{propnonmonotonic}
	Consider a contingent claim with payoff $Z$ and short selling of the underlying is banned. Then, the minimum risk exposures for the seller and the buyers under the equivalent martingale measure $\mathbb{Q}$ are given by
	\begin{equation}\label{ee}
		\rho^s(S,v;Z)=\mathbf{E}_{\mathbb{Q}}[R(Z-v_T^{v,\phi^{s*}})], \quad \quad \rho^b(S,v;Z)=\mathbf{E}_{\mathbb{Q}}[R(v_T^{v,-\phi^{b*}}-Z)],
	\end{equation}
where $\phi^{s*}=\pi^+\in \Phi$ and $\phi^{b*}=\pi^-\in \Phi$.
 In other words, the optimal hedging strategies involve mimicking the Black-Scholes hedge whenever possible, except holding zero units of the underlying if a short position is required.
\end{proposition}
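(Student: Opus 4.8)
The plan is to reduce the buyer's statement to the seller's via the equivalence relation (\ref{equivalence}): the Black--Scholes perfect hedge of $-Z$ is $-\pi$ with $(-\pi)^+=\pi^-$, so $\rho^b(S,v;Z)=\rho^s(S,-v;-Z)$ turns the buyer's formula with $\phi^{b*}=\pi^-$ into the seller's formula for the claim $-Z$; hence it suffices to prove $\rho^s(S,v;Z)=\mathbf{E}_{\mathbb{Q}}[R(Z-v_T^{v,\pi^+})]$ with optimal control $\phi^{s*}=\pi^+$. I would follow the convexity-plus-representation template of Proposition \ref{proposition2} rather than the closed-form trial solution of Proposition \ref{proposition1}, since no explicit candidate is available once $\pi$ changes sign. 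A preliminary step is to note $\pi^\pm\in\Phi$, which is immediate from $0\le\pi^\pm\le|\pi|$ and the square-integrability of $\pi$.

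Next I would record the two linear consequences of the wealth dynamics (\ref{hedging_account1}) under $\mathbb{Q}$ (where $\mu=r$): for any admissible $\phi$, $v_T^{v,\pi^+}-v_T^{v,\phi}=\sigma\int_0^T e^{r(T-u)}(\pi^+_u-\phi_u)S_u\,dW_u$, while, using $Z=v_T^{z,\pi}$ and $\pi-\pi^+=-\pi^-$, the residual under the candidate strategy is $Z-v_T^{v,\pi^+}=(z-v)e^{rT}-\sigma\int_0^T e^{r(T-u)}\pi^-_u S_u\,dW_u$. Convexity of $R$ gives the pointwise bound $R(Z-v_T^{v,\phi})\ge R(Z-v_T^{v,\pi^+})+D^+R(Z-v_T^{v,\pi^+})\,(v_T^{v,\pi^+}-v_T^{v,\phi})$, so after taking $\mathbf{E}_{\mathbb{Q}}$ the entire claim reduces to the single inequality
\[
\mathbf{E}_{\mathbb{Q}}\!\left[D^+R\!\left(Z-v_T^{v,\pi^+}\right)\left(v_T^{v,\pi^+}-v_T^{v,\phi}\right)\right]\ge 0 \quad \text{for all } \phi\in\Phi.
\]
Writing $D^+R(Z-v_T^{v,\pi^+})=\mathbf{E}_{\mathbb{Q}}[D^+R(\cdots)]+\int_0^T\eta_u\,dW_u$ and applying It\^o's isometry exactly as in (\ref{c3}), this becomes $\mathbf{E}_{\mathbb{Q}}\!\left[\int_0^T\eta_u\,\sigma e^{r(T-u)}(\pi^+_u-\phi_u)S_u\,du\right]\ge 0$.

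The hard part is controlling the sign of the representing integrand $\eta$. In Proposition \ref{proposition2} this was free: the residual was monotone in $S_T$, so Lemma 3.2 of \cite{guo2017equal} yielded a one-signed integrand. Here $Z-v_T^{v,\pi^+}$ is generally non-monotone in $S_T$ (as for a butterfly spread), so that lemma does not apply and a substitute is required. The structural facts I would exploit are that $\pi^+\pi^-=0$, so an admissible deviation $\pi^+_u-\phi_u$ is sign-unconstrained on $\{\pi_u>0\}$ but equals $-\phi_u\le 0$ precisely on the short-sale region $\{\pi_u\le 0\}$ where the residual's martingale part lives, together with the sign-definiteness of the cross term $\mathbf{E}_{\mathbb{Q}}[\int_0^T\sigma^2 e^{2r(T-u)}\phi_u\pi^-_u S_u^2\,du]\ge 0$ coming from $\phi,\pi^-\ge 0$. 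The goal is to show that the contribution of $\{\pi_u>0\}$ drops out and the surviving term $\int_{\{\pi_u\le 0\}}(-\eta_u)\phi_u\,\sigma e^{r(T-u)}S_u\,du$ is non-negative. Proving this sign/support property of $\eta$ — a non-monotone generalization of Lemma 3.2, transparent in the Gaussian (deterministic-integrand) regime via Stein's identity and otherwise reachable by a conditional-Jensen or approximation argument — is where I expect the real work to lie; everything preceding it is routine, and the buyer's identity then follows verbatim from (\ref{equivalence}).
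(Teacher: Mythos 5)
Your reduction of the buyer's claim to the seller's via (\ref{equivalence}) is fine, and the convexity/representation template is set up correctly as far as it goes: the subgradient bound and It\^o isometry do reduce everything to the inequality $\mathbf{E}_{\mathbb{Q}}\bigl[\int_0^T\eta_u\,\sigma e^{r(T-u)}(\pi^+_u-\phi_u)S_u\,du\bigr]\geq 0$ for all $\phi\in\Phi$. But the proof stops exactly where the proposition's content lies, and you say so yourself: the sign/support property of $\eta$ is left as a conjecture. Worse, the property you hope for is doubtful as stated. The integrand in the representation of $D^+R\bigl(Z-v_T^{v,\pi^+}\bigr)$ at time $u$ (think Clark--Ocone) aggregates the sensitivity of the \emph{entire future} error $\int_u^T e^{r(T-s)}\pi^-_s S_s\,dW_s$, so $\eta_u$ has no reason to vanish on $\{\pi_u>0\}$, and the residual is no longer a monotone functional of $S_T$, so Lemma 3.2 of \cite{guo2017equal} gives you nothing. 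The conditional first-order conditions ($\eta$ orthogonal to perturbations on $\{\pi_u>0\}$, one-signed on $\{\pi_u\leq 0\}$) are \emph{consequences} of optimality of $\pi^+$, not a route to it; deriving them independently is essentially as hard as the proposition itself. So this is a genuine gap, not a routine lemma to be filled in.

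The paper avoids the representation machinery entirely, and the trick is worth knowing: among nonnegative $\phi$, the candidate $\phi^{s*}=\pi^+$ minimizes the \emph{instantaneous variance} of the discounted hedging error pointwise, since $(\pi_t-\phi_t)^2\geq(\pi^-_t)^2$ for every $\phi_t\geq 0$ (trivial when $\pi_t\geq 0$; when $\pi_t<0$, $\pi_t-\phi_t\leq\pi_t$). This gives a pathwise ordering of quadratic variations, $\langle\hat{v}^{0,\pi-\phi}\rangle_T\geq\langle\hat{v}^{0,\pi-\phi^{s*}}\rangle_T$. Writing the error as a Dambis--Dubins--Schwarz time change of a Brownian motion, $Z-v_T^{v,\phi}=e^{rT}\bigl[z-v+B_{\langle\hat{v}^{0,\pi-\phi}\rangle_T}\bigr]$, the comparison of expected risks follows from convexity of $R$ alone: a Brownian motion run to a larger stopping time is a conditional mean-preserving spread, so Jensen/optional sampling yields $\mathbf{E}_{\mathbb{Q}}\bigl[R\bigl(e^{rT}[z-v+B_{\tau_1}]\bigr)\bigr]\geq\mathbf{E}_{\mathbb{Q}}\bigl[R\bigl(e^{rT}[z-v+B_{\tau_2}]\bigr)\bigr]$ for $\tau_1\geq\tau_2$. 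No sign information about any martingale representation is needed, which is precisely why the argument survives the loss of monotonicity of the payoff. If you want to complete your writeup, replace the $\eta$-analysis with this variance-domination-plus-time-change step; the rest of your outline (admissibility $0\leq\pi^\pm\leq|\pi|$, the linear wealth identities, and the buyer via (\ref{equivalence})) can stand as written.
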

\begin{proof}
Note that $\phi^{s*}$ and $\phi^{b*}$ are the positive and negative parts of $\pi$.
Denote $\hat{v}_t^{v,\phi}=e^{-rt}v_t^{v,\phi}$ and $\hat{S}_t=e^{-rt}S_t$. We then notice that, under $\mathbb{Q}$,
  \begin{eqnarray*}
    \langle \hat{v}^{0,\pi-\phi}\rangle_T &=& \int_0^T(\pi_t-\phi_t)^2\sigma^2\hat{S}_t^2dt\geq \int_0^T(-\pi_t^-)^2\sigma^2\hat{S}_t^2dt=\langle \hat{v}^{0,\pi-\phi^{s*}}\rangle_T, \\
     \langle \hat{v}^{0,\pi+\phi}\rangle_T &=& \int_0^T(\pi_t+\phi_t)^2\sigma^2\hat{S}_t^2dt\geq \int_0^T(\pi_t^+)^2\sigma^2\hat{S}_t^2dt=\langle \hat{v}^{0,\pi+\phi^{s*}}\rangle_T,
     \end{eqnarray*}
 for all $\phi\in \Phi$,     where  $\langle f \rangle_t$ denotes the quadratic variation at time $t$.
These quadratic variation inequalities can then be used to infer inequalities in expected risks via a suitable time change.
Applying the Dambis-Dubins-Schwarz Theorem (See Theorem 3.4.6 and Problem 3.4.7 in \cite{karatzas05}) and Jensen's inequality, we have
       \begin{eqnarray*}
    \mathbf{E}_{\mathbb{Q}}[R(Z-v_T^{v,\phi})] & \geq&   \mathbf{E}_{\mathbb{Q}}[R(e^{rT}[z-v+\hat{v}_T^{0,\pi-\phi}])]=\mathbf{E}_{\mathbb{Q}}[R(e^{rT}[z-v+B_{\langle\hat{v}^{0,\pi-\phi}\rangle_T}])] \\
    &\geq& \mathbf{E}_{\mathbb{Q}}[R(e^{rT}[z-v+B_{\langle\hat{v}^{0,\pi-\phi^{s*}}\rangle_T}])]=  \mathbf{E}_{\mathbb{Q}}[R(Z-v_T^{v,\phi^{s*}})]\\
    \mathbf{E}_{\mathbb{Q}}[R(v_T^{v,-\phi}-Z)] & \geq&    \mathbf{E}_{\mathbb{Q}}[R(e^{rT}[v-z+\hat{v}_T^{0,-(\pi+\phi)}])]=\mathbf{E}_{\mathbb{Q}}[R(e^{rT}[v-z+B_{\langle\hat{v}^{0,-(\pi+\phi)}\rangle_T}])] \\
    &\geq& \mathbf{E}_{\mathbb{Q}}[R(e^{rT}[v-z+B_{\langle\hat{v}^{0,-(\pi+\phi^{b*})}\rangle_T}])]=  \mathbf{E}_{\mathbb{Q}}[R(v_T^{v,-\phi^{b*}}-Z)]
        \end{eqnarray*}
        for all $\phi\in\Phi$, where $B_t$ is a standard one-dimensional Brownian motion under $\mathbb{Q}$. Therefore the expected risks for the seller and buyer are minimised when the hedging strategies are $\phi^{s*}$ and $-\phi^{b*}$, respectively.
\end{proof}
\begin{corollary}
The equal risk price of $Z$ under the risk neutral measure is given by $v$, which is the unique solution to the equation
\begin{equation}
\mathbf{E}_{\mathbb{Q}}[R(Z-v_T^{v,\phi^{s*}})]=\mathbf{E}_{\mathbb{Q}}[R(v_T^{v,-\phi^{b*}}-Z)].
\end{equation}
\end{corollary}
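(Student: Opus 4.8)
The plan is to observe that this corollary is essentially the specialization of the general existence--uniqueness result (Theorem \ref{thm1}) to the risk-neutral case, with the minimized risk exposures replaced by the closed forms just obtained. First I would recall that, by Definition \ref{def_equal}, the equal-risk price is by construction the value $v$ solving $\rho^s(S,v;Z)=\rho^b(S,v;Z)$. Under $\mathbb{Q}$ (i.e.\ with $\mu$ replaced by $r$), Proposition \ref{propnonmonotonic} identifies the optimal hedges as $\phi^{s*}=\pi^+$ and $\phi^{b*}=\pi^-$ and gives $\rho^s(S,v;Z)=\mathbf{E}_{\mathbb{Q}}[R(Z-v_T^{v,\phi^{s*}})]$ and $\rho^b(S,v;Z)=\mathbf{E}_{\mathbb{Q}}[R(v_T^{v,-\phi^{b*}}-Z)]$. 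Substituting these two expressions into the equal-risk condition reproduces verbatim the displayed equation of the corollary, so the only remaining content is that this equation admits exactly one root.

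For existence and uniqueness I would give a self-contained argument based on Lemma \ref{lemma1} (so as not to merely defer to Theorem \ref{thm1}, whose proof is stated for the physical measure; since working under $\mathbb{Q}$ is just the Black--Scholes model with $\mu=r$, all hypotheses carry over unchanged, but an explicit argument is cleaner here). Define $g(v):=\rho^s(S,v;Z)-\rho^b(S,v;Z)$. By part~1 of Lemma \ref{lemma1}, $\rho^s$ is non-increasing in $v$ and $\rho^b$ is non-decreasing in $v$, so $g$ is non-increasing; by part~2, $g(v)\to L^B-\infty=-\infty$ as $v\to\infty$ and $g(v)\to\infty-L^B=+\infty$ as $v\to-\infty$. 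Continuity of $g$ in $v$, which I would obtain by dominated convergence applied to the integral representations in \eqref{ee} (the integrand $R(\,\cdot\,)$ being continuous and suitably dominated under Assumption \ref{assumption1}), then yields at least one root by the intermediate value theorem.

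The step requiring the most care is upgrading the monotonicity to \emph{strict} monotonicity, which is what forces the root to be unique. Here I would use the explicit solution of \eqref{hedging_account1} under $\mathbb{Q}$, namely $v_T^{v,\phi}=ve^{rT}+\sigma\int_0^Te^{r(T-u)}\phi_uS_u\,dW_u$, whose stochastic-integral part does not depend on $v$. Thus increasing $v$ rigidly translates $Z-v_T^{v,\phi^{s*}}$ downward and $v_T^{v,-\phi^{b*}}-Z$ upward, and combining this with the fact that $R$ is convex with $R(x)>0$ for $x>0$ (so $R$ is strictly increasing on the region that carries positive probability mass) shows that $\rho^s$ is strictly decreasing and $\rho^b$ strictly increasing wherever they are finite. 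Hence $g$ is strictly decreasing and its unique zero $\bar v(S)$ is the equal-risk price, completing the proof. I expect no deeper obstacle, since Proposition \ref{propnonmonotonic} and Lemma \ref{lemma1} already supply the hard analytic content.
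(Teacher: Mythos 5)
Your overall route---substituting the risk expressions of Proposition \ref{propnonmonotonic} into the equal-risk condition of Definition \ref{def_equal}, then establishing existence and uniqueness of the root---is the same skeleton as the paper's, which obtains the corollary immediately from Proposition \ref{propnonmonotonic} together with Theorem \ref{thm1}. (Incidentally, your stated reason for not invoking Theorem \ref{thm1}, namely that its proof is ``for the physical measure,'' is misplaced: the proof in Appendix \ref{appendix_2} is in fact written with expectations under $\mathbb{Q}$, so it applies here verbatim.) Your existence argument via the limits in Lemma \ref{lemma1} and the intermediate value theorem is sound; for continuity, rather than dominated convergence it is cleaner to note that $v\mapsto\rho^s(S,v;Z)$ and $v\mapsto\rho^b(S,v;Z)$ are convex (the wealth $v_T^{v,\phi}$ is affine in $(v,\phi)$, $R$ is convex, and partial minimization preserves convexity), and finite convex functions on $\mathbb{R}$ are continuous---this is exactly the convexity the paper's uniqueness proof relies on.

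The genuine gap is in your uniqueness step. You assert that $\rho^s$ is strictly decreasing and $\rho^b$ strictly increasing ``wherever they are finite,'' deducing this from $R(x)>0$ for $x>0$. But Definition \ref{def1} admits risk functions that are flat on $(-\infty,0]$, notably $R_1(x)=x^+$, which the paper uses throughout (Theorem \ref{thm2}, Corollary \ref{coro3}), and for such $R$ strict monotonicity genuinely fails: for a European call under $R_1$, Proposition \ref{proposition1} gives $\rho^s(S,v;Z)=e^{rT}\left(C^{BS}-v\right)^+$, which is identically $0=L^B$ for all $v\geq C^{BS}$---once the seller can superhedge, translating the argument of $R$ further into the flat region changes nothing; symmetrically $\rho^b$ sits at $L^B$ for $v$ sufficiently negative. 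So $g(v)=\rho^s-\rho^b$ need not be strictly decreasing globally, and your argument as written does not exclude an interval of roots along which both risks equal $L^B$. This is precisely the case the paper's Theorem \ref{thm1} proof is built to eliminate: if $v_1>v_2$ were both roots, monotonicity plus convexity force $\rho^b$ (hence $\rho^s$) to be constant at $L^B$ there, and Jensen's inequality then yields $R\left(v_1e^{rT}-\mathbf{E}_{\mathbb{Q}}[Z]\right)\leq L^B\leq 0$ and $R\left(\mathbf{E}_{\mathbb{Q}}[Z]-v_2e^{rT}\right)\leq L^B\leq 0$, so both arguments are nonpositive, contradicting $(v_1-v_2)e^{rT}>0$. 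To repair your proof, either import that contradiction argument, or restrict the strictness claim to roots where the common risk value exceeds $L^B$ and dispose of the $L^B$ case by Jensen; as stated, your argument is only valid for strictly increasing risk functions such as $R_2(x)=e^x-1$.
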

\begin{remark}
Proposition \ref{propnonmonotonic} can be further extended beyond short selling bans to arbitrary convex trading constraints. Instead of $\phi^{s*}=\pi^+$ and $\phi^{b*}=\pi^-$, similar arguments imply that the optimal trading strategies can be found by projecting $\pi$ and $-\pi$ pointwise to set of admissible trading strategies.
\end{remark}
Therefore we have obtained analytical results to compute the equal risk prices of arbitrary contingent claims under the equivalent martingale measure $\mathbb{Q}$. Nevertheless, in terms of physical interpretation, it is preferable to compute expected risks under the physical measure $\mathbb{P}$. We will demonstrate how this can be done in the next section.


\section{Numerical scheme for equal risk prices under the physical measure}
In this section, we provide a numerical scheme to solve the  HJB equations (\ref{HJB_seller}) and (\ref{HJB_buyer}), where the risk is measured under the original physical measure $\mathbb{P}$.  Mathematically, they are of the same type and we will only focus on the former in our demonstration. The latter can be solved numerically in a similar fashion.
\subsection{Discretization}
We first introduce the time reversal $\t=T-t$, which transforms the HJB equation (\ref{HJB_seller}) into
\beq
\left\{
\begin{array}{l}
   F^s_{\t}=\inf\limits_{\phi\geq 0}\left\{\frac{1}{2}S^2\sigma^2F^s_{SS}+\phi S^2\sigma^2 F^s_{Sv}+\frac{1}{2}S^2\sigma^2 \phi^2F^s_{vv}+\mu SF^s_S+(rv+(\mu-r)\phi S)F^s_v\right\},\\
   F^s(0,S,v)=R\left(Z(S)-v\right), (\t,S,v)\in \Omega:=[0,T]\times[0,\i)\times\mathbf{R}\label{ex1}.
\end{array}\right.
\eeq
Then we truncate the unbounded domain into a bounded one:
\begin{equation*}
 \bar{\Omega}= [0,T]\times[0,S_{\max}]\times[-v_{\max},v_{\max}].
\end{equation*}
In order to establish a properly-closed PDE system,  boundary conditions will be imposed in each of our numerical examples. As pointed out by  \cite{barles1997convergence}, truncating domain incurs some approximation errors, which are expected to be arbitrarily small by extending the computational domain. The domain is then discretized by  a set of uniformly distributed grids  as follows,
\begin{eqnarray*}
  S_i &=& (i-1)\cdot \Delta S, i=1,\cdots,N_1,\\
  v_j &=& (j-1)\cdot \Delta v, j=1,\cdots,N_2, \\
  \t_l &=& (l-1)\cdot\Delta \t, l=1,\cdots,M,
\end{eqnarray*}
where $N_1,N_2$ and $M$ are the grid sizes in the $S$,$v$ and $\t$ directions, respectively. The corresponding step sizes are   $\ds\Delta S=\frac{S_{\max}}{N_1-1}$, $\ds\Delta v=\frac{v_{\max}}{N_2-1}$, and $\ds\Delta \t=\frac{T}{M-1}$. The value of the unknown function $F^s(\t,S,v)$ at a grid point is denoted by $F_{i,j}^n=F^s(\t_n,S_i,v_j).$

Now we  adopt an explicit scheme to approximate the  function $\phi$  as follows:
\begin{equation}
  \phi_{i,j}^n:=\phi(\t_n,S_i,v_j)=\max\left\{-\frac{\Delta v}{4 \Delta S}\frac{F^{n}_{i+1,j+1}+F^{n}_{i-1,j-1}-F^{n}_{i+1,j-1}-F^{n}_{i-1,j+1}}{F^{n}_{i,j+1}-2F^{n}_{i,j}+F^{n}_{i,j-1}},0\right\},
\end{equation}
and then apply an implicit scheme for the function $F$
\begin{equation}
  \frac{F^{n+1}_{i,j}-F^{n}_{i,j}}{\Delta \t}=\mathcal{L}^{\phi_{i,j}^{n}}_3F_{i,j}^{n+1},
\end{equation}
where
\begin{equation}\label{ll}
  \mathcal{L}^{\phi}_3F=aF_{SS}+\rho(\phi) F_{Sv}+b(\phi)F_{vv}+cF_{S}+dF_v,
\end{equation}
with $a=\frac{1}{2}\sigma^2S^2, b(\phi)=\frac{1}{2}\phi^2\sigma^2S^2,\rho(\phi)=\phi\sigma^2S^2, c=\mu S,d=(rv+(\mu-r)\phi S)$.

The alternative direction implicit (ADI) scheme is then applied to split the linear operator $\mathcal{L}_3$ defined in (\ref{ll}) into two steps. In the first step, only the derivatives with respect to $S$ are evaluated in terms of the unknown values $F^{2n+1}$, while the other derivatives are replaced by the known values $F^{2n}$. The difference equation obtained in the first step is implicit in the $S$-direction and explicit in $v$-direction. The procedure is then repeated at next step with the difference equation implicit in the $v$-direction and explicit in the $S$-direction. The cross derivative is always treated explicitly.  Thus, we have two difference equations:
\begin{eqnarray}
  \frac{F_{i,j}^{2n+1}-F_{i,j}^{2n}}{\Delta \t }&=& a_{i}\frac{F_{i+1,j}^{2n+1}-2F_{i,j}^{2n+1}+F_{i-1,j}^{2n+1}}{\Delta S^2}+c_i\frac{F_{i+1,j}^{2n+1}-F_{i-1,j}^{2n+1}}{2\Delta S} \label{former} \\
   & & + b_{i,j}\frac{F_{i,j+1}^{2n}-2F_{i,j}^{2n}+F_{i,j-1}^{2n}}{\Delta v^2}+d_{i,j}\frac{F_{i,j+1}^{2n}-F_{i,j-1}^{2n}}{2\Delta v}\nn\\
   &&+\rho_{i,j}\frac{F_{i+1,j+1}^{2n}-F_{i-1,j+1}^{2n}-F_{i+1,j-1}^{2n}+F_{i-1,j-1}^{2n}}{4\Delta S\Delta v},\nn\\
  \frac{F_{i,j}^{2n+2}-F_{i,j}^{2n+1}}{\Delta \t }&=&  b_{i,j}\frac{F_{i,j+1}^{2n+2}-2F_{i,j}^{2n+2}+F_{i,j-1}^{2n+2}}{\Delta v^2}+d_{i,j}\frac{F_{i,j+1}^{2n+2}-F_{i,j-1}^{2n+2}}{2\Delta v} \label{latter}\\
   & & + a_i\frac{F_{i+1,j}^{2n+1}-2F_{i,j}^{2n+1}+F_{i-1,j}^{2n+1}}{\Delta S^2}+c_i\frac{F_{i+1,j}^{2n+1}-F_{i-1,j}^{2n+1}}{2\Delta S} \nn\\
    &&+\rho_{i,j} \frac{F_{i+1,j+1}^{2n+1}-F_{i-1,j+1}^{2n+1}-F_{i+1,j-1}^{2n+1}+F_{i-1,j-1}^{2n+1}}{4\Delta S\Delta v}.\nn
\end{eqnarray}
The unknown functions $F_{i,j}^n$ and $\phi_{i,j}^n$ are both derived by solving these difference equations.
\begin{remark}
It has also to be pointed out that we haven't proved that  the above ADI numerical scheme is convergent when it is applied to solve the nonlinear HJB equation. Instead, we   would show the numerical convergence on different grids with $l_2$ error in the following examples. Readers, who feel interested in the proof of numerical methods  for solving HJB equation, are referred to  \cite{kushner1990numerical,forsyth2007numerical,wang2008maximal,mazhukang2020}
\end{remark}

After solving the PDE systems (\ref{HJB_seller}) and (\ref{HJB_buyer}), the minimum risk exposure for  the seller and the buyer are produced numerically on the grids. To compute the equal-risk price for the contingent claims, the root of  (\ref{equal_risk}) is solved numerically, which is similar to determining the optimal exercise price from the values of American put option through the free-boundary condition. We demonstrate how to compute equal-risk prices numerically in the following simple example.

Give a current stock price $S$, which is located between two grid points $S_i$ and $S_{i+1}$, i.e. $S\in (S_i,S_{i+1})$. When the offer price $v$ is larger than the equal-risk price $v(S_i)$, the seller would take less risk for he gets more compensation, i.e
\begin{equation}
  \rho^s(S_i,v;Z)<\rho^b(S_i,v;Z), \quad v>v(S_i).
\end{equation}
On the other hand, when the offer price $v$ is smaller than the equal-risk price $v(S_{i})$, the buyer takes less risk because he pays less, i.e.
\begin{equation}
  \rho^s(S_i,v;Z)>\rho^b(S_i,v;Z), \quad v<v(S_i).
\end{equation}
Consequently, the equal-risk price of the claim $Z$ with current price $S_{i}$ is given by
\begin{equation}\label{S1}
  v(S_i)=\max\limits_{j}\{v_j,j=1,\cdots,N_2 \quad| \quad \rho^s(S_i,v_j;Z)>\rho^b(S_i,v_j;Z)\}.
\end{equation}
Similarly, the equal-risk price of the claim $Z$ with current price $S_{i+1}$ is obtained as
\begin{equation}\label{S2}
  v(S_{i+1})=\max\limits_{j}\{v_j,j=1,\cdots,N_2 \quad|\quad \rho^s(S_{i+1},v_j;Z)>\rho^b(S_{i+1},v_j;Z)\}.
\end{equation}
Finally, the  equal-risk price of  the contingent claim $Z$ with current price $S$ is approximated by
\begin{equation}\label{equal_risk_price_S}
  v(S)=\frac{v(S_i)+v(S_{i+1})}{2}.
\end{equation}

\subsection{Numerical examples}
In this subsection, two numerical examples are provided to illustrate the performance and convergence of our numerical scheme.  Both examples are carried out with Matlab 2016a on an Intel(R) Xeon (R) CPU and the risk function is taken to be $R_2(x)=e^x-1$.
\subsubsection{Example  1: European call option}
In the first example, the contingent claim is a European call option, of which the value functions $F^s(t,S,v)$ and  $F^b(t,S,v)$ have been obtained analytically according to Propositions \ref{proposition1} and \ref{proposition2}. The analytical solutions are considered as the benchmark to illustrate the performance of our numerical scheme.  Before implementing our numerical scheme, we need provide the proper boundary conditions for the PDE systems (\ref{HJB_seller}) and (\ref{HJB_buyer}).

First of all, at the boundary $S=0$, since the stock follows a geometric Brownian motion, the European call option is worthless at expiry. The seller of such a claim faces no liability; while the buyer gets nothing. In addition, the hedging strategies for both the seller and the buyer must be $\phi^*=0$ because they could not invest on a stock whose price is zero.  Therefore, the boundary conditions at $S=0$  are
\beq
\left\{
\begin{array}{l}
F^s(t,0,v)=R(-ve^{r(T-t)}), \\
F^b(t,0,v)=R(ve^{r(T-t)}).\\
\end{array}\right.
\eeq
On the other hand, $S\rightarrow\i$ implies $S_T\rightarrow\i$, which indicates that the European call option is priceless. The buyer of such a claim would have an infinite income at expiry. The boundary condition for the buyer at $S\rightarrow\i$ is imposed as
\begin{equation}\label{buyer_Si}
    \lim\limits_{S\rightarrow\i}F^b(t,S,v)=\lim\limits_{S\rightarrow\i}\inf\limits_{\phi(\cdot)\in\Phi}\mathbf{E}_{\mathbb{Q}}^{S,v}\left[R\left(v_T^{v,\phi(\cdot)}-(S_T-K)^+\right)\right]=\lim\limits_{S\rightarrow\i}R(-S)=-1.
\end{equation}
This bounded Dirichlet boundary condition is approximated by
\begin{equation}\label{buyer_St}
    F^b(t,S_{\max},v)=-1.
\end{equation}
As for the seller, we have
 \begin{equation}\label{seller_Si}
    \lim\limits_{S\rightarrow\i}F^s(t,S,v)=\lim\limits_{S\rightarrow\i}\inf\limits_{\phi(\cdot)\in\Phi}\mathbf{E}_{\mathbb{Q}}^{S,v}\left[R\left((S_T-K)^+-v_T^{v,\phi(\cdot)}\right)\right]=\i.
\end{equation}
When the value function approaches infinity on the boundary, we must perform growth order analysis in order to impose the appropriate boundary condition.  For any admissible hedging strategy $\phi$, by applying the Jensen's inequality to the risk function $R(x)$, we have
\begin{align}
  \mathbf{E}_{\mathbb{Q}}^{S,v}\left[R\left(Z(S_T)-v_T^{v,\phi(\cdot)}\right)\right] &\geq R
  \left(\mathbf{E}_{\mathbb{Q}}^{S,v}\left[Z(S_T)-v_T^{v,\phi(\cdot)}\right]\right)\nn\\
  &=R\left(e^{r(T-t)}(C^{BS}(S,K,r,\sigma,T-t)-v)\right).
\end{align}
Consequently, the asymptotic behavior of the value function $F^s(t,S,v)$ is given by
\begin{equation}\label{tes}
  \lim\limits_{S\rightarrow\i}F^s(t,S,v)\geq \lim\limits_{S\rightarrow\i}R\left(C^{BS}(S,K,r,\sigma,T-t)e^{r(T-t)}-ve^{r(T-t)}\right)\rightarrow\i\quad  \text{for $t\in[0,T]$},
\end{equation}
which means that the growth order of $F^s(t,S,v)$ with respect to $S$ is higher than that of the right hand side for any $t$. On the other hand, at the $t=T$, it follows that
\begin{equation}\label{terminal}
  \lim\limits_{S\rightarrow\i}F^s(T,S,v)=\lim\limits_{S\rightarrow\i}R\left((S-K)^+-v\right),
\end{equation}
which implies that the growth order of $F^s(t,S,v)$ is the same as the right hand side of the above equation at $t=T$. In order to make sure the boundary condition at $S\rightarrow\i$ is consistent with the terminal condition at the corner point, the boundary condition at $S=S_{\max}$ is
\begin{equation}
    F^s(t,S_{\max},v)=R\left((S_{\max}-K)^+-ve^{r(T-t)}\right).
\end{equation}
\begin{remark}
When the value function is bounded, such as Equation (\ref{buyer_Si}), we can directly impose the bound on the truncated boundary, such as Equation (\ref{buyer_St}). When the value function approaches infinity on the boundary, such as Equation (\ref{tes}), we must perform growth order analysis first and then impose an approximate boundary condition similar to Equation (\ref{seller_Si}) to ensure that it is consistent with the terminal condition. For the rest of this paper, we will provide the boundary conditions derived via these steps without providing the full details.
\end{remark}
Following Lemma \ref{lemma1},  the boundary conditions along the $v$ direction are
\beq
\left\{
\begin{array}{l}
\lim\limits_{v\rightarrow\i}F^b(t,S,v)=\i,\\
\lim\limits_{v\rightarrow\i} F^s(t,S,v)=-1, \\
\lim\limits_{v\rightarrow-\i} F^b(t,S,v)=-1,\\
\lim\limits_{v\rightarrow-\i} F^s(t,S,v)=\i. \\
\end{array}\right.
\eeq
which are approximated by
 \beq
\left\{
\begin{array}{l}
F^b(t,S,v_{\max})=R\left(v_{\max}e^{r(T-t)}-(S-K)^+\right),\\
F^s(t,S,v_{\max})=-1,\\
F^b(t,S,-v_{\max})=-1,\\
 F^s(t,S,-v_{\max})=R\left((S-K)^++v_{\max}e^{r(T-t)}\right).\\
\end{array}\right.
\eeq

After providing these proper boundary conditions for the value functions $F^s(t,S,v)$ and  $F^b(t,S,v)$, we now implement our numerical scheme. The parameters used in the this experiment are listed in Table \ref{parameters}.
   \begin{table}[!htb]
\centering
\begin{tabular}{|c|c|c|c|c|c|c|c|c|}
  \hline
    Parameters & $K$ & $T$ & $\mu$& $r$ & $\sigma$ & $S_{\max} $&$ v_{\max}$&$v_0$  \\
 \hline
  values &5 &0.5 & 0.05 &0.05 & 0.3 &10 & $5$&2  \\
  \hline
\end{tabular}
\caption{Parameters. }
\label{parameters}
\end{table}

Given $\t=T$ and $v=v_0$, the values of $F^s(\t,S,v)$ and $F^b(t,S,v)$ are computed at different values of $S$ and then listed in Tables \ref{table2} and \ref{table3}. To determine the numerical rates of convergence, we choose a sequence of meshes by successively halving the mesh parameters. The analytical solutions (\ref{seller_risk_value}) and (\ref{buyer_risk_value1}) obtained in Propositions \ref{proposition1} and \ref{proposition2} are used as a benchmark when we report the $l_2$ error.  The \textit{ratio} column of Tables \ref{table2} and \ref{table3} is the ratio of successive $l_2$ error as the grid is refined by a factor of two.
\begin{table}[htbp]
\centering
\begin{tabular}{|l|c|l|l|l|l|l|c|c|}
\hline
$(N_1,N_2,M)$&$S=4$ &$S=4.5$&$S=5$ &$S=5.5$&$S=6$&$l_2$ error&ratio\\
\hline
(21,21,160) &  -0.8604  &  -0.8399  &  -0.7981 &   -0.7216 &   -0.5889&0.0452&\\
 (41,41,320)&-0.8595 &   -0.8371  &  -0.7915  &  -0.7074&    -0.5601&0.0123&3.7\\
(81,81,640)& -0.8593&   -0.8364  &  -0.7898  &  -0.7039 &   -0.5528&  0.0040  &3.1 \\
(161,161,1280)& -0.8591   & -0.8361  &  -0.7891  &  -0.7026  &  -0.5503&  0.0012  &3.4 \\
\hline
Benchmark (\ref{seller_risk_value})&  -0.8592  & -0.8362  &  -0.7892 &   -0.7023    &-0.5492&   &\\
\hline
\end{tabular}
\caption{The values of $F^s(T,S,v_0)$  with different meshes.}\label{table2}
\end{table}
\begin{table}[htbp]
\centering
\begin{tabular}{|l|c|l|l|l|l|l|c|c|c|}
\hline
$(N_1,N_2,M)$&$S=4$ &$S=4.5$&$S=5$ &$S=5.5$&$S=6$&$l_2$ error&ratio\\
\hline
(21,21,160) &    6.3860  &  5.7689    &4.8250 &   3.7099   & 2.6162&0.1635&\\
 (41,41,320)&   6.3423 &   5.6985&    4.7540&    3.6598 &   2.5889&0.0403&4.1\\
(81,81,640)&6.3307 &   5.6812 &   4.7369  &  3.6475   & 2.5819&  0.0099 &4.1 \\
(161,161,1280)&6.3302  &  5.6791   & 4.7348   & 3.6465 &   2.5820& 0.0071&1.4 \\
\hline
Benchmark  (\ref{buyer_risk_value1}) &  6.3268  &  5.6755  &  4.7313   & 3.6435    &2.5800&   &\\
\hline
\end{tabular}
\caption{The values of $F^b(T,S,v_0)$ with different meshes.}\label{table3}
\end{table}

From Tables  \ref{table2} and \ref{table3}, it is observed that the successive $l_2$ errors approach zero as the grid spacing is reduced, which show that our numerical results are in good agreement with the benchmark solution.  Therefore, we choose the numerical results calculated on the grid $(161,161,1280)$ to compute equal-risk prices numerically.
\begin{figure}[htbp]
  \centering
  \includegraphics[width=0.45\textwidth]{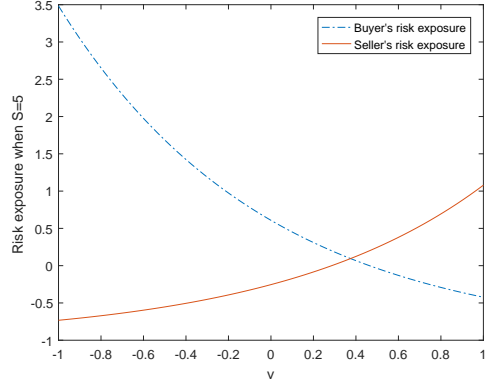}\\
  \caption{The minimum risk exposure for the  seller and the buyer with $S=5$.}\label{sample}
\end{figure}

In Figure \ref{sample}, we demonstrate how the minimum risk exposure for the buyer and seller changes as $v$ varies with $S=5$.  As expected, the minimum risk exposure for the seller is increasing; while the one for the buyer is decreasing as $v$ increases. The equal-risk price of a European call option with the current price $S=5$  corresponds to the offer price $v$ that makes $\rho^s(S,v;Z)=\rho^b(S,v;Z)$, which is numerically solved according to formula (\ref{equal_risk_price_S}).
\begin{figure}[!htbp]
\centering \subfigure[The equal-risk prices of European call options.]{
\label{figure4a}
\includegraphics[width=0.475\textwidth]{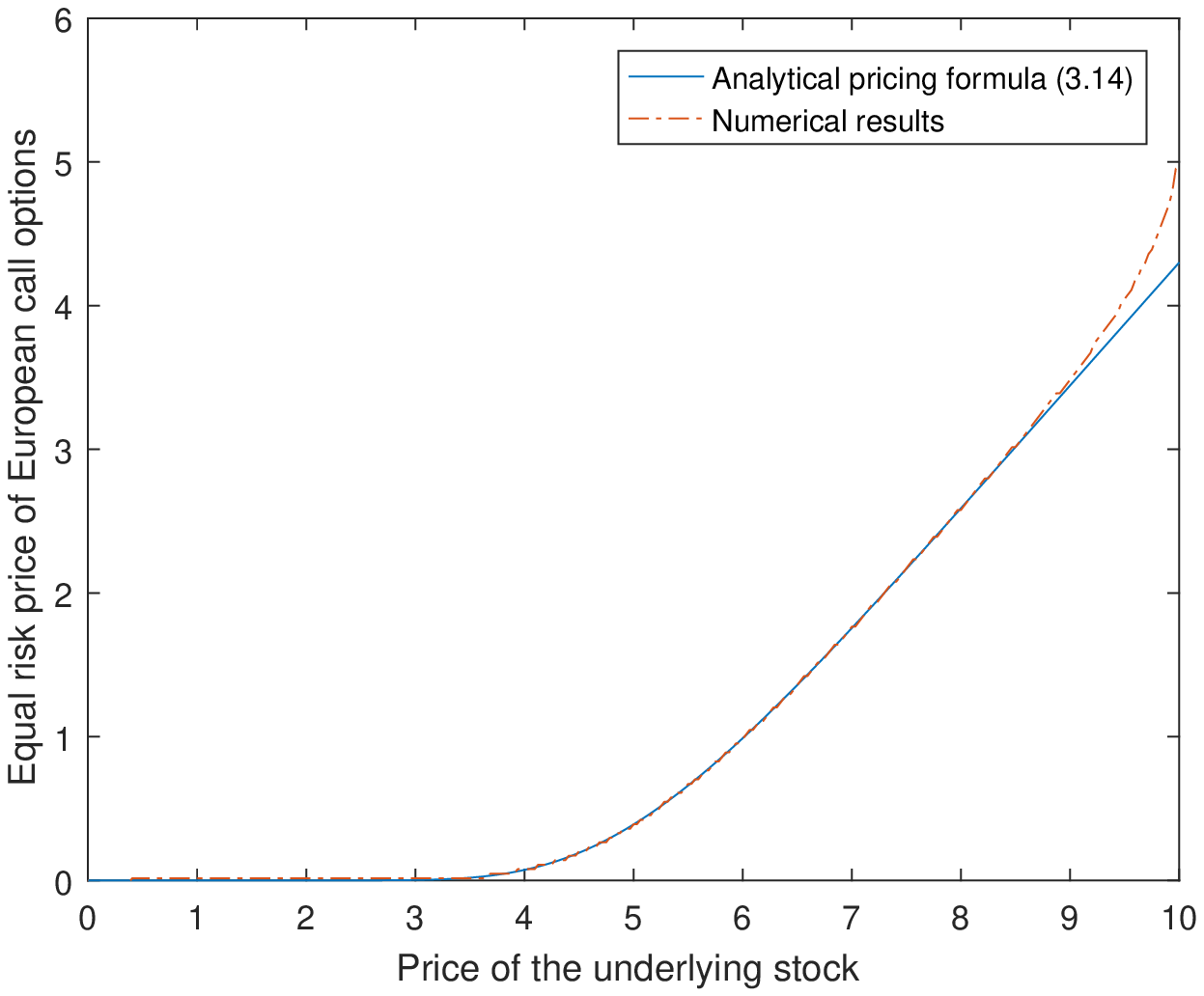}
} \subfigure[The absolute error. ] {
\label{figure4b}
\includegraphics[width=0.475\textwidth]{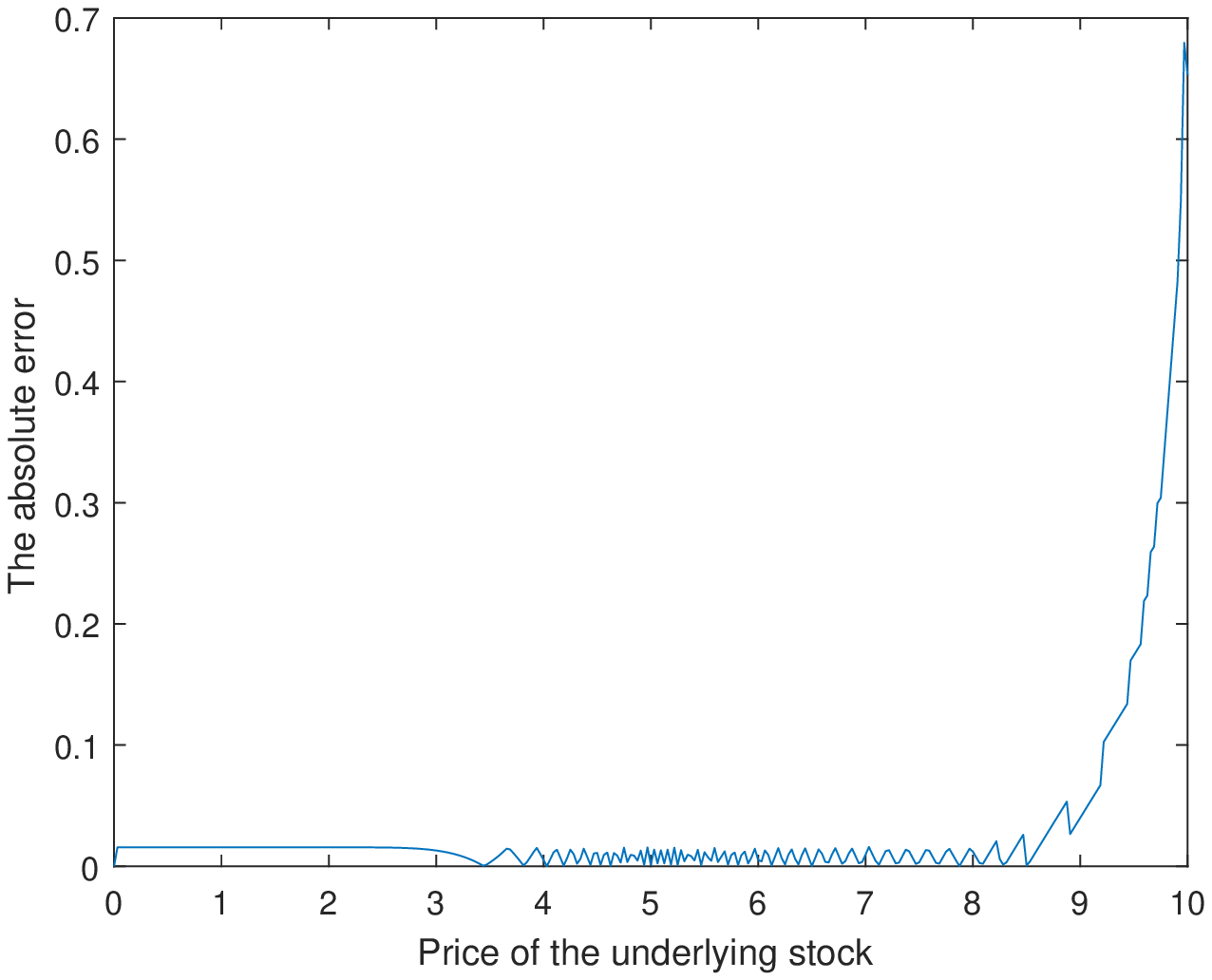}
}
\caption{Comparisons between the  pricing formula (\ref{equal_risk_price_call}) and our numerical results.}
\end{figure}

We repeat the above steps for different values of $S$.  The equal-risk prices of European call options  are plotted in Figure \ref{figure4a}  as the underlying stock price varies from $0$ to $10$, compared with the results  from the  pricing formula (\ref{equal_risk_price_call}). The absolute errors between them are plotted in Figure \ref{figure4b}. From Figures \ref{figure4a} and \ref{figure4b}, our numerical equal-risk prices are in good agreement with those from the  pricing formula except near the boundary $S=S_{\max}$. This error is the result of our approximate boundary condition at the truncated boundary. The first example demonstrates that our method for producing equal-risk prices by solving HJB equations numerically is consistent with the pricing formula, which provides motivates us to apply it to  general contingent claims in the next subsection.

\subsubsection{Example 2: Butterfly spread option}
The second example derives the equal-risk price for a butterfly spread option, of which the payoff is defined by
\begin{equation}\label{butterfly}
  Z(S)=(S-K_1)^+-2(S-\frac{K_1+K_2}{2})^++(S-K_2)^+.
\end{equation}
Figure \ref{figure_butterfly} provides a diagram of  the payoff.
\begin{figure}[htbp]
  \centering
  \includegraphics[width=0.45\textwidth]{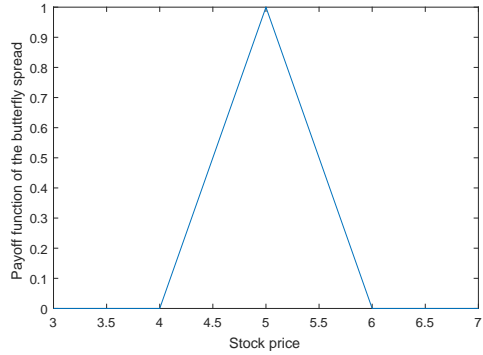}\\
  \caption{Payoff of a butterfly option with $K_1=4,K_2=6$.}\label{figure_butterfly}
\end{figure}
It is clear that the payoff is non-monotonic and non-smooth. \cite{guo2017equal} did not provide an equal-risk pricing formula for these cases. Since the payoff  is bounded, Assumption \ref{assumption1} always holds for any risk function. As a result, the optimal control problems (\ref{seller_risk}) and (\ref{buyer_risk})  are both well-defined and have a finite infimum. We now apply our numerical scheme to solve the these HJB equations first and then derive its equal-risk price numerically.

For the boundary conditions, since the stock follows a geometric Brownian motion, the butterfly spread option becomes worthless  at both $S=0$ and $S\rightarrow\i$. The seller  faces no liability and he has no motivation to hedge. Consequently, he would invests his initial wealth on the risk-free account and obtains the profits $ve^{r(T-t)}$ at time $T$. Consequently, the boundary conditions at  $S=0$ and $S\rightarrow\i$  are given by
\beq
\left\{
\begin{array}{l}
F^s(t,0,v)=R(-ve^{r(T-t)}), \\
\lim\limits_{S\rightarrow\i}F^s(t,S,v)=R(-ve^{r(T-t)}).\\
\end{array}\right.
\eeq
On the other hand, at the boundaries $S=0$ and $S\rightarrow\i$, the buyer pays $v$ at time $t$ for a worthless contingent claim and has no motivation to hedge. At expiry, the buyer only faces a deterministic liability $ve^{r(T-t)}$ and we impose the boundary conditions as
\beq
\left\{
\begin{array}{l}
F^b(t,0,v)=R(ve^{r(T-t)}), \\
\lim\limits_{S\rightarrow\i}F^b(t,S,v)=R(ve^{r(T-t)}).\\
\end{array}\right.
\eeq
The boundary condition along the $v$ direction are also implied by Lemma \ref{lemma1}, i.e
 \beq
\left\{
\begin{array}{l}
 \lim\limits_{v\rightarrow\i}F^s(t,S,v)=-1, \\
  \lim\limits_{v\rightarrow\i}F^b(t,S,v)=\i, \\
   \lim\limits_{v\rightarrow-\i}F^s(t,S,v)=\i, \\
\lim\limits_{v\rightarrow-\i}F^b(t,S,v)=-1,
\end{array}\right.
\eeq
which are approximated by
 \beq
\left\{
\begin{array}{l}
 F^s(t,S,v_{\max})=-1,\\
 F^b(t,S,v_{\max})=R(v_{\max}e^{r(T-t)}-Z(S)), \\
 F^s(t,S,-v_{\max})=R(Z(S)+v_{\max}e^{r(T-t)}), \\
 F^b(t,S,v_{\max})=-1,
\end{array}\right.
\eeq
to ensure their consistency with the terminal condition.

Now we are in the position to apply our numerical scheme to numerically solve the PDE system associated with the butterfly spread option.  The parameters in the second example are listed in Table \ref{parameters2}
\begin{table}[!htb]
\centering
\begin{tabular}{|c|c|c|c|c|c|c|c|c|c|}
  \hline
    Parameters& $K_1$  & $K_2$ & $T$ & $\mu$ &  $r$ &$\sigma$ & $S_{\max} $&$ v_{\max}$&$v_0$  \\
 \hline
  values&4 &6 &0.5 &0.05 & 0.05 & 0.3 &10 & $3$&1  \\
  \hline
\end{tabular}
\caption{Parameters.}
\label{parameters2}
\end{table}

For the butterfly spread option, we do not have an analytical solution. Hence we choose the results computed on the uniform mesh with $321\times 321\times 2560$ nodes as a benchmark solution. The numerical results of the value functions $F^s(T,S,v_0)$ and $F^b(T,S,v_0)$ calculated on different meshes are reported in Tables \ref{table6} and \ref{table7}.

\begin{table}[htbp]
\centering
\begin{tabular}{|l|c|l|l|l|l|l|c|c|c|}
\hline
$(N_x,N_y,N_T)$&$S=4$ &$S=4.5$&$S=5$ &$S=5.5$&$S=6$&$l_2$ error&ratio\\
\hline
(11,11,40) & -0.5654  & -0.4601 &  -0.3767 &  -0.4398 &  -0.4925&0.1183&\\
(21,21,80) & -0.5429  & -0.4816   &-0.4548  & -0.4715  & -0.5106&0.0294&4.0\\
 (41,41,160)& -0.5445&   -0.4919&   -0.4696&   -0.4832 &  -0.5172&0.0068&4.3\\
(81,81,320)& -0.5451  & -0.4944  & -0.4729&   -0.4859 &  -0.5189 &  0.0015  &4.7 \\
\hline
(321, 321, 2560)&  -0.5453  & -0.4951  &  -0.4739 &   -0.4867    &-0.5194&   &\\
\hline
\end{tabular}
\caption{The values of $F^s(T,S,v_0)$ on different meshes  }\label{table6}
\end{table}
\begin{table}[htbp]
\centering
\begin{tabular}{|l|c|l|l|l|l|l|c|c|c|}
\hline
$(N_x,N_y,N_T)$&$S=4$ &$S=4.5$&$S=5$ &$S=5.5$&$S=6$&$l_2$ error&ratio\\
\hline
(11,11,40) & -0.5480  &  -0.4491  &  -0.3658  &  -0.4112 &   -0.4385&0.1368&\\
(21,21,80) &  -0.5427&    -0.4807  &  -0.4508&    -0.4568 &   -0.4669&0.0324 &4.2\\
 (41,41,160)&  -0.5444  &  -0.4914   & -0.4665&    -0.4704  &  -0.4763&0.0071&4.5\\
(81,81,320)& -0.5451 &   -0.4939 &   -0.4701 &   -0.4736 &   -0.4786& 0.0013    &5.6 \\
\hline
(321, 321, 2560)& -0.5452   & -0.4946   & -0.4710 &   -0.4742 &   -0.4786&   &\\
\hline
\end{tabular}
\caption{The values of $F^b(T,S,v_0)$ on different meshes }\label{table7}
\end{table}
The $l_2$ error reported in Tables \ref{table6} and \ref{table7} indicates that the numerical results have converged and they can be used to produce the equal-risk price for the butterfly spread option by solving  (\ref{equal_risk}). Given $S=5$, we plot the minimum risk exposure for the seller and the buyer in Figure \ref{sample_b}. The equal-risk price for the butterfly spread option with the current price $S=5$ should be the price such that the minimum risk exposures for the seller and the buyer are equal. It can be numerically solved by formula (\ref{equal_risk_price_S}).
\begin{figure}[htbp]
  \centering
  \includegraphics[width=0.45\textwidth]{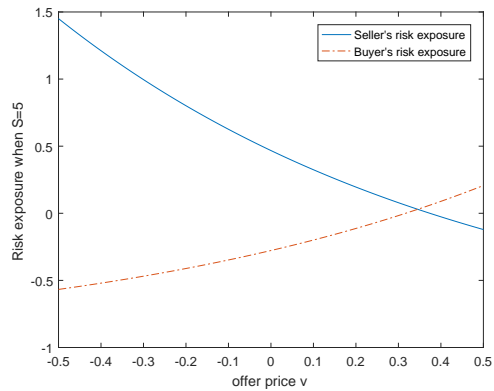}\\
  \caption{Equal-risk price of butterfly.}\label{sample_b}
\end{figure}

When  short selling is permitted and the market is complete,  a butterfly spread option can be replicated by three European call options as shown in Equation (\ref{butterfly}). Its Black-Scholes price is a linear combination of three call option prices.
\begin{equation}\label{butterfly_spread_BS}
  v=C^{BS}(S,K_1,r,\sigma,T)-2C^{BS}(S,\frac{K_1+K_2}{2},r,\sigma,T)+C^{BS}(S,K_2,r,\sigma,T).
\end{equation}
This Black-Scholes price is taken as the benchmark solution to illustrate the effect of the short selling ban on the butterfly spread option. The equal-risk prices calculated from our PDE framework and  those from the formula (\ref{butterfly_spread_BS}) are plotted in Figure \ref{figure7a} and the relative difference to the Black-Scholes price is depicted in Figure \ref{figure7b}.

\begin{figure}[!htbp]
\centering \subfigure[ Two kinds of price for the butterfly spread option.]{
\label{figure7a}
\includegraphics[width=0.475\textwidth]{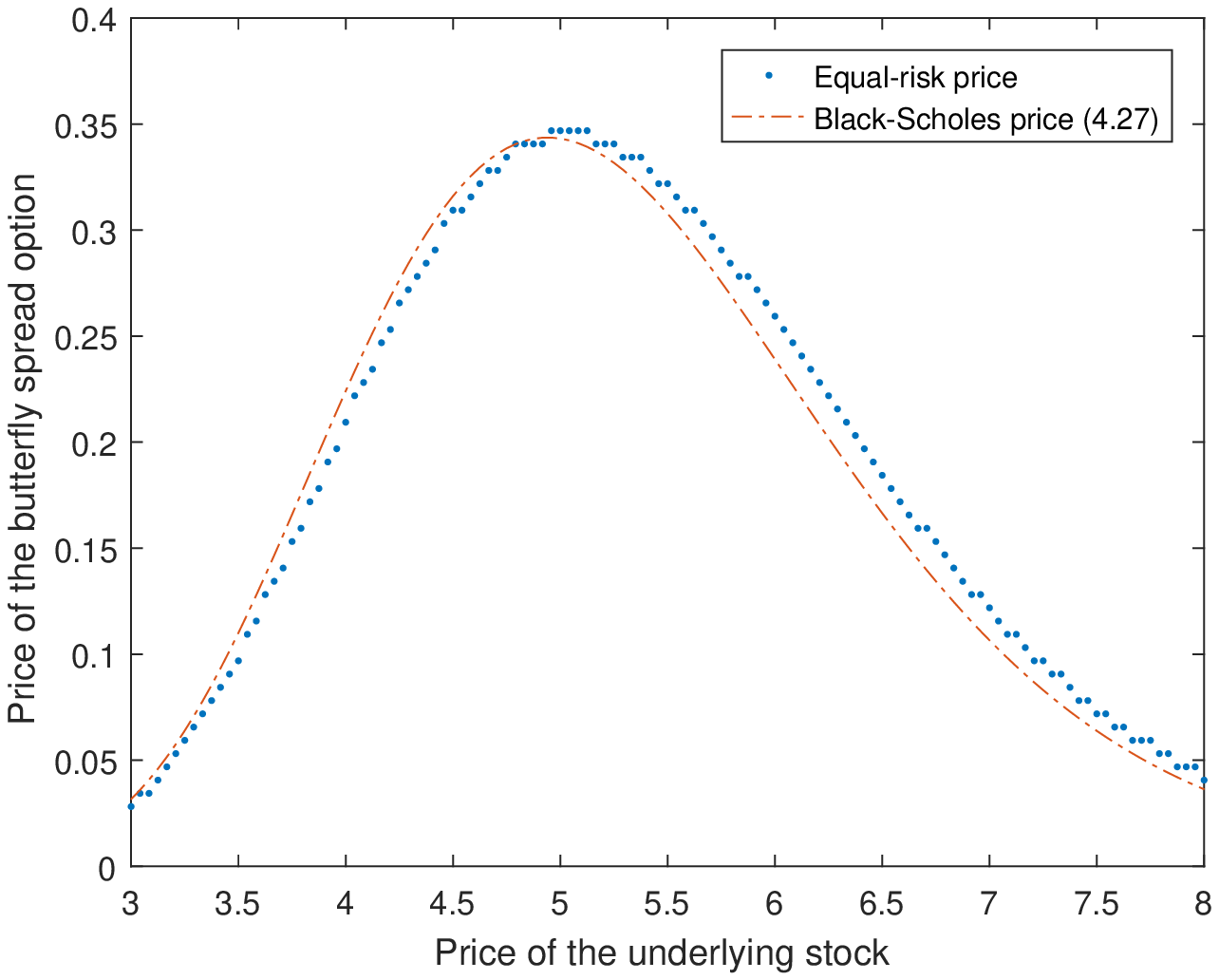}
} \subfigure[The percentage distance. ] {
\label{figure7b}
\includegraphics[width=0.475\textwidth]{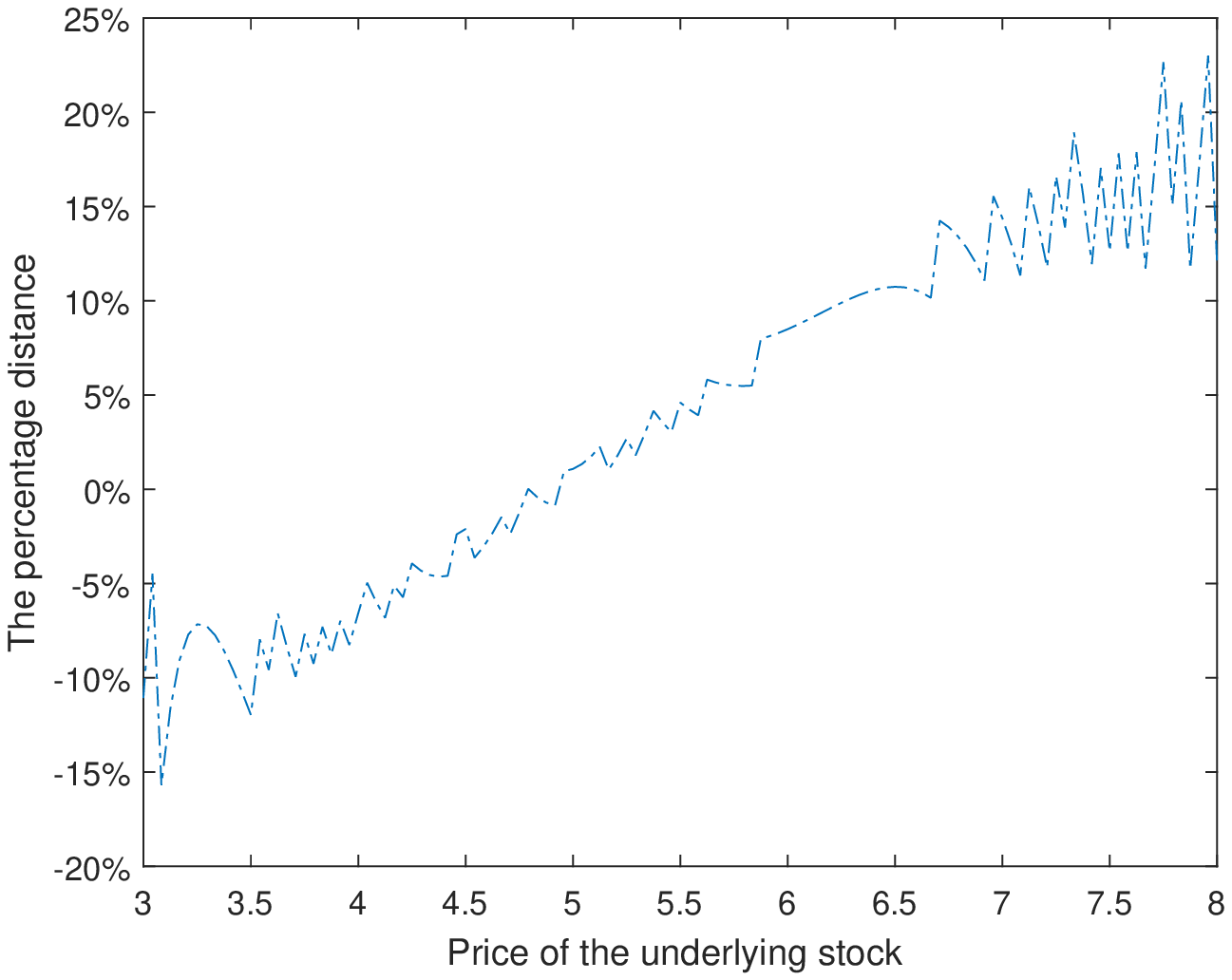}
}
\caption{Comparisons between equal-risk prices and Black-Scholes prices.}
\end{figure}

Unlike the cases of European call (put) options where short selling decreases (increases) the option price for all the underlying stock prices, it is observed from Figure \ref{figure7a} that the equal-risk price is higher than the Black-Scholes price when  $S>5$; while it is lower than the Black-Scholes price on the  other side.  Figure \ref{figure7b} shows that the relative difference between the equal-risk price and the Black-Scholes price is significant even though the absolute difference is small, which demonstrates that the short selling ban indeed affects the price of the butterfly spread option. In particular, the short selling ban lowers the option price in regions where the price is an increasing function of the underlying, and it raises the option prices in regions where the price is a decreasing function of the underlying.

Finally, we consider how the hedging strategy is affected by the  short selling ban, using the seller  as an example. The optimal hedging strategy for the seller is numerically calculated from the PDE system (\ref{ex1}). For comparison, the optimal hedging strategy in the Black-Scholes model without the short selling ban is
\begin{equation}\label{delta_butterfly_spread_BS}
  \phi^{BS}=\frac{\partial C^{BS}(S,K_1,r,\sigma,T)}{\partial S}-2\frac{\partial C^{BS}(S,\frac{K_1+K_2}{2},r,\sigma,T)}{\partial S}+\frac{\partial C^{BS}(S,K_2,r,\sigma,T)}{\partial S}.
\end{equation}
The numerical results calculated from the PDE system and the formula (\ref{delta_butterfly_spread_BS}) are plotted in Figure \ref{figure8a} with $v=0.5$.

\begin{figure}[!htbp]
\centering \subfigure[ Optimal hedging strategy for the seller .]{
\label{figure8a}
\includegraphics[width=0.475\textwidth]{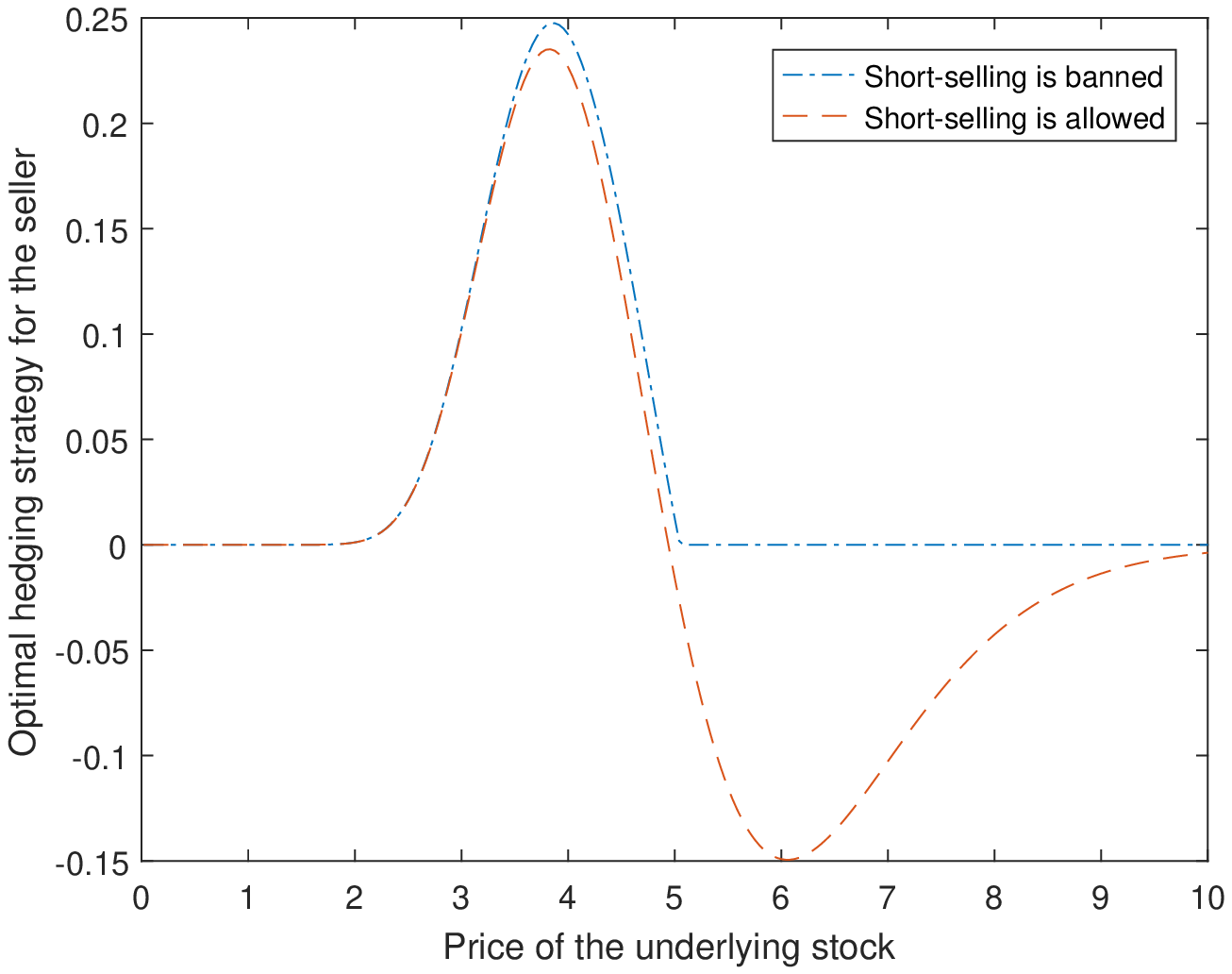}
} \subfigure[The absolute difference ] {
\label{figure8b}
\includegraphics[width=0.475\textwidth]{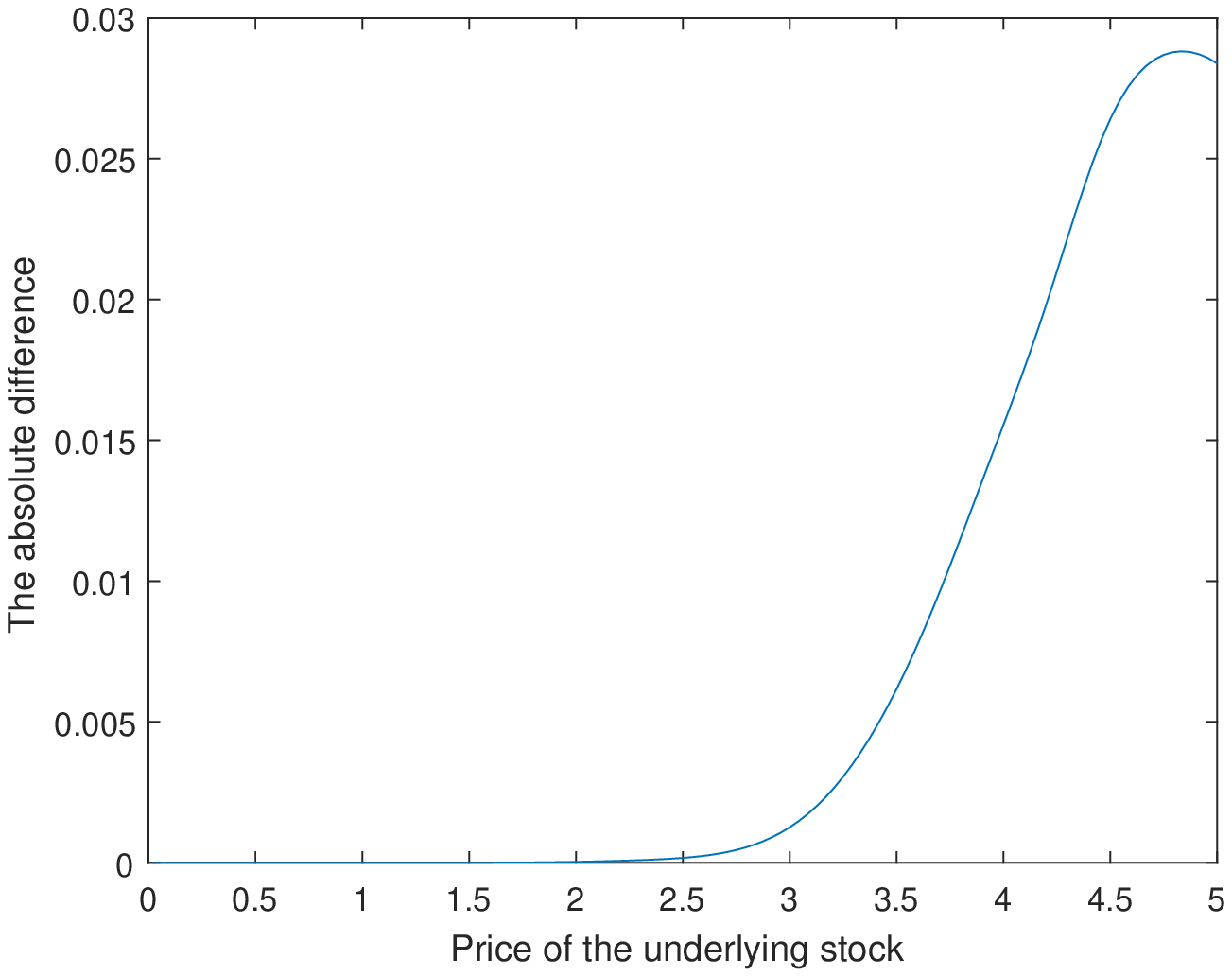}
}
\caption{Comparison between the optimal hedging strategy.}
\end{figure}

It is observed from Figure \ref{figure8a} that the optimal hedging strategy takes both positive and negative values as the underlying stock price varies when short selling is allowed. After imposing the short selling ban, the negative part becomes zero and the positive part increases. The absolute difference between them is plotted in Figure \ref{figure8b} when $S<5$.
\section{Conclusions}
This paper establishes a unified PDE framework for the recently proposed  \textit{equal-risk pricing approach} in order to explore how a short selling ban would affect the valuation of general contingent claims.   When the contingent claim is a European call or put option, the  PDE system can be  solved analytically, which leads to the same pricing formula provided by \cite{guo2017equal}. In addition, our PDE approach was able to adapt to options with  non-monotonic payoffs, such as the butterfly spread option, which was not addressed in \cite{guo2017equal}. Thus our PDE framework has significantly expanded the range of the application for the equal-risk pricing approach. According to the numerical results, the effects of the short selling ban are illustrated through comparisons between equal-risk prices and Black-Scholes prices. Generally, the short selling ban lowers the prices of  European call
options; while it has an opposite effect on the prices of  European put options. As for the butterfly spread option, the short selling ban lowers the option price when the payoff  is increasing with respect to the underlying stock price; while it raises the option price when the payoff  is decreasing.

\section{Acknowledgment}
Guiyuan Ma acknowledges the financial support from the National Natural Science Foundation of China (72101199) and the Fundamental Research Funds for the Central Universities (SK2021019). Song-Ping Zhu and Ivan Guo  acknowledge the financial support from the ARC under the ARC(DP) funding scheme (DP140102076, DP170101227). The Monash Centre for Quantitative Finance and Investment Strategies has been supported by BNP Paribas.

\begin{appendices}

\section{The proof of Lemma \ref{lemma1}}\label{appendix_1}
1. If $Z_1\leq Z_2$, the following inequality always holds because of the monotonicity of $R(x)$ for any admissible hedging strategy $\phi(\cdot)$,
\begin{equation}
   R\left(Z_1(S_T)-v_T^{v,\phi(\cdot)}\right)\leq R\left(Z_2(S_T)-v_T^{v,\phi(\cdot)}\right).
\end{equation}
Taking expectation and infimum on both sides leads to
\begin{equation}
  \rho^s(S,v;Z_1)\leq \rho^s(S,v;Z_2).
\end{equation}
When $v_1\leq v_2$, for any admissible hedging strategy $\phi(\cdot)$ the inequality becomes
\begin{equation}
   R\left(Z(S_T)-v_T^{v_1,\phi(\cdot)}\right)\geq R\left(Z(S_T)-v_T^{v_2,\phi(\cdot)}\right),
\end{equation}
which results in
\begin{equation}
  \rho^s(S,v_1;Z)\geq \rho^s(S,v_2;Z).
\end{equation}
By relation (\ref{equivalence}), the monotonicity of $\rho^b(S,v;Z)$ is characterized as
\begin{align}
   \rho^b(S,v;Z_1)&=\rho^s(S,-v;-Z_1)\geq \rho^s(S,-v;-Z_2)= \rho^b(S,v;Z_2)\nn\\
\rho^b(S,v_1;Z)&=\rho^s(S,-v_1;-Z)\leq \rho^s(S,-v_2;-Z)= \rho^b(S,v_2;Z)\nn.
\end{align}
2. Choosing any admissible strategy $\phi$ satisfying Assumption \ref{assumption1}, we obtain
 \begin{equation}\label{up}
 \lim\limits_{v\rightarrow\i}\rho^s(S,v;Z)\leq
  \lim\limits_{v\rightarrow\i}\mathbf{E}_{\mathbb{Q}}\left[R\left(Z(S_T)-v_T^{v,\phi}\right)\right]\triangleq L^B.
 \end{equation}
Due to the fact that $R\left(Z(S_T)-v_T^{v,\phi(\cdot)}\right)\geq L^B$ always holds for any  $\phi(\cdot)\in\Phi$, we have
\begin{equation}\label{up2}
   \lim\limits_{v\rightarrow\i}\rho^s(S,v;Z)\geq L^B.
\end{equation}
Combing Equations (\ref{up}) and  (\ref{up2}) together, we have $\lim\limits_{v\rightarrow\i}\rho^s(S,v;Z)=L^B$.

For any $\phi(\cdot)\in\Phi$, we apply Jensen's inequality to risk function $R(x)$ and obtain
\begin{equation}\label{up1}
 \mathbf{E}_{\mathbb{Q}}^{v,S}\left[R\left(v_T^{v,-\phi(\cdot)}-Z(S_T)\right)\right]\geq R\left(ve^{rT}-\mathbf{E}_{\mathbb{Q}}Z(S_T)\right).
\end{equation}
Taking infimum and limits on both sides results in
\begin{equation*}
\lim\limits_{v\rightarrow\i}\rho^b(S,v;Z) =\lim\limits_{v\rightarrow\i}\inf\limits_{\phi(\cdot)\in \mathbf{\Phi}}  \mathbf{E}_{\mathbb{Q}}^{v,S}\left[R\left(v_T^{v,-\phi(\cdot)}-Z(S_T)\right)\right]\geq \lim\limits_{v\rightarrow\i}R\left(ve^{rT}-\mathbf{E}_{\mathbb{Q}}Z(S_T)\right)=\i.
 \end{equation*}
Following the relation (\ref{equivalence}), it is easy to derive that
\begin{align}
    \lim\limits_{v\rightarrow -\i}\rho^s(S,v;Z)&=\lim\limits_{v\rightarrow -\i}\rho^b(S,-v;-Z)=\lim\limits_{v\rightarrow \i}\rho^b(S,v;-Z)=\i\nn\\
\lim\limits_{v\rightarrow -\i}\rho^b(S,v;Z)&= \lim\limits_{v\rightarrow -\i}\rho^s(S,-v;-Z)= \lim\limits_{v\rightarrow \i}\rho^s(S,v;-Z)=L^B\nn,
\end{align}
which completes the proof.
\section{The proof of Theorem \ref{thm1}}\label{appendix_2}
Given the current underlying price $S$ and the European contingent claim $Z$, we construct a map:
\begin{equation}\label{map}
  H(v):=\rho^b(S,v;Z)-\rho^s(S,v;Z).
\end{equation}
According to Lemma \ref{lemma1}, such a map $H(v)$ is continuous and non-decreasing. On one hand, we have
\begin{equation}\label{v0}
  \lim\limits_{v\rightarrow -\i}H(v)=\lim\limits_{v\rightarrow-\i}[\rho^b(S,v;Z)-\rho^s(S,v;Z)]=-\i.
\end{equation}
On the other hand, as $v$ tends toward infinity, we obtain
\begin{equation}\label{vi}
  \lim\limits_{v\rightarrow\i}H(v)=\lim\limits_{v\rightarrow\i}[\rho^b(S,v;Z)-\rho^s(S,v;Z)]=\i.
\end{equation}
Hence, we conclude that there exists at least one solution to $H(v)=0$ on $(-\i,\i)$.

To demonstrate the uniqueness of the solution, we first assume that the equation $H(v)=0$ has two different solutions $v_1>v_2$. According to the monotonicity described in Lemma \ref{lemma1}, we have
\begin{equation}\label{ineq}
  \rho^b(S,v_1;Z)\geq\rho^b(S,v_2;Z)=\rho^s(S,v_2;Z)\geq\rho^s(S,v_1;Z)=\rho^b(S,v_1;Z),
\end{equation}
which implies that $\rho^b(S,v_1;Z)=\rho^b(S,v_2;Z)$. Again, according to the monotonicity and convexity of $\rho^b(S,v;Z)$ with respect to $v$, we come to a conclusion that $\rho^b(S,v;Z)$ is constant for $ v\leq v_1$. It follows that
\begin{equation}\label{fow}
  \rho^s(S,v_1;Z)=\rho^b(S,v_2;Z)=\lim\limits_{v\rightarrow-\i}\rho^b(S,v;Z)=L^B
\end{equation}
By Jensen's inequality, we have
\beq
\left\{
\begin{array}{l}
\ds  R(v_1e^{rT}-\mathbf{E}_{\mathbb{Q}}Z(S_T))\leq\rho^s(S,v_1;Z)=L^B\leq 0,  \label{jensen} \\
  \ds R(\mathbf{E}_{\mathbb{Q}}Z(S_T)-v_2e^{rT})\leq\rho^b(S,v_2;Z)=L^B\leq 0.
\end{array}\right.
\eeq
The above equations implies that both $v_1e^{rT}-\mathbf{E}_{\mathbb{Q}}\left[Z(S_T)\right]$ and $\mathbf{E}_{\mathbb{Q}}\left[Z(S_T)\right]-v_2e^{rT}$ are non-positive because that $R(x)\geq 0$ for any $x\geq 0$. However, this conclusion contradicts the fact that
\begin{equation}\label{contradict}
  v_1e^{rT}-\mathbf{E}_{\mathbb{Q}}\left[Z(S_T)\right]+\mathbf{E}_{\mathbb{Q}}\left[Z(S_T)\right]-v_2e^{rT}=(v_1-v_2)e^{rT}>0.
\end{equation}
Therefore, the solution must be unique.
\section{The proof of Corollaries  \ref{coro1}-\ref{coro3}}\label{appendix_3}
For Corollary \ref{coro1},
we consider the  minimum risk exposure of the seller for a contingent claim $-(K-S)^+$ first. To calculate $\rho^s(S,v;-(K-S)^+)$, we need to  solve the associate HJB equation
\beq
\left\{
\begin{array}{l}
\ds  0=\frac{\partial F^s}{\partial t}+\inf\limits_{\phi\geq 0}\left\{\mathcal{L}_1^{\phi}F^s\right\},\label{HJB_seller1} \\
  \ds F^s(T,S,v)=R\left(-(K-S)^+-v\right).
\end{array}\right.
\eeq
With the same technique in Proposition \ref{proposition1}, the solution can be derived as
 \begin{equation}\label{put}
  F^s(t,S,v)=R\left(e^{r(T-t)}[-P^{BS}(S,K,r,\sigma,T-t)-v]\right).
 \end{equation}
According to the relation (\ref{equivalence}), we have
\begin{equation*}
  \rho^b(S,v;(K-S)^+)=\rho^s\left(S,-v;-(K-S)^+\right)=F^s(0,S,-v)=R\left(e^{rT}[v-P^{BS}(S,K,r,\sigma,T)]\right).
\end{equation*}
For Corollary \ref{coro2}, we explore
the minimum  risk exposure of the buyer for a contingent claim $-(K-S)^+$ first. To compute $\rho^b(S,v;-(K-S)^+)$, we goes to the HJB equation
\beq
\left\{
\begin{array}{l}
\ds  0=\frac{\partial F^b}{\partial t}+\inf\limits_{\phi\geq 0}\left\{\mathcal{L}_2^{\phi}F^b\right\},\label{HJB_seller2} \\
  \ds F^b(T,S,v)=R\left(v+(K-S)^+\right).
\end{array}\right.
\eeq
Similar to Proposition \ref{proposition2}, the solution to such a PDE system is
 \begin{equation}\label{coroll}
   F^b(t,S,v)=\frac{1}{\sqrt{2\pi}}\int_{-\i}^{\i}R\left(ve^{r(T-t)}+(K-Se^{(r-\frac{\sigma^2}{2})(T-t)+\sigma\sqrt{T-t}x})^+\right)e^{-\frac{x^2}{2}}dx.
 \end{equation}
From relation (\ref{equivalence}), the seller's risk exposure of European put options is
\begin{align*}
   \rho^s(S,v;(K-S)^+)&=\rho^b(S,-v;-(K-S)^+)=F^b(0,S,-v)\nn\\
&=\frac{1}{\sqrt{2\pi}}\int_{-\i}^{\i}R\left((K-Se^{(r-\frac{\sigma^2}{2})T+\sigma\sqrt{T}x})^+-ve^{rT}\right)e^{-\frac{x^2}{2}}dx.\label{demo4}
\end{align*}
Finally, the proof of Corollary \ref{coro3} is similar to Theorem \ref{thm1}.
\end{appendices}

\pagestyle{plain}
\setlength{\bibsep}{0ex}
\begin{spacing}{1}
\bibliographystyle{apa}
\bibliography{ref}
\end{spacing}

\end{document}